\documentclass[runningheads]{llncs}

\usepackage[T1]{fontenc}
\usepackage{graphicx}
\usepackage{amsmath}
\usepackage{amssymb}
\usepackage{xcolor}
\usepackage{textcomp}
\usepackage{listings}
\usepackage{courier}
\usepackage{extarrows}
\usepackage{algorithm2e}
\usepackage{algpseudocode}
\usepackage[utf8]{inputenc}
\usepackage{threeparttable}
\usepackage{mathpartir}
\usepackage{appendix}

\definecolor{rustkeyword}{RGB}{0,0,180}
\definecolor{rusttype}{RGB}{140,0,0}
\definecolor{rustcomment}{RGB}{0,110,0}
\definecolor{ruststring}{RGB}{160,0,160}

\newcommand{\Crate}{\mathcal{C}}        

\newcommand{\Type}{\text{Type}}

\newcommand{\dom}{\mathsf{dom}}

\newcommand{\Canon}{\mathsf{Canon}}

\newcommand{\TokObs}{\mathsf{Obs}}
\newcommand{\FreshOK}{\mathsf{FreshOK}}

\newcommand{\CapOK}{\mathsf{CapOK}}

\newcommand{\unify}{\mathrel{\dot{=}}}

\newcommand{\Cfg}{\mathsf{Cfg}}
\newcommand{\Fire}{\mathsf{Fire}}
\newcommand{\SolveGuard}{\mathsf{SolveGuard}}

\newcommand{\fvL}{\mathsf{fv}_L}

\lstdefinelanguage{Rust}{
  keywords={fn, let, mut, struct, enum, impl, return, match, if, else, while, for, in, move, pub, use, mod, crate, ref, where, unsafe},
  keywordstyle=\color{rustkeyword}\bfseries,
  ndkeywords={String, Vec, Option, Result, u32, i32, bool, Self, Box, Rc, Arc},
  ndkeywordstyle=\color{rusttype}\bfseries,
  comment=[l]{//},
  commentstyle=\color{rustcomment}\itshape,
  stringstyle=\color{ruststring},
  sensitive=true,
  basicstyle=\ttfamily\small,
  breaklines=true,
  frame=single,
  numbers=left,
  numberstyle=\tiny\color{gray},
  captionpos=b,
  escapeinside={(*@}{@*)}, 
}

\begin{document}
\title{A Synthesis Method of Safe Rust Code Based on Pushdown Colored Petri Nets}
\titlerunning{Safe Rust Synthesis Based on Pushdown Colored Petri Nets}
%
%
\author{Kaiwen Zhang\inst{1}\orcidID{0000-0003-4573-8968} \and
Guanjun Liu\inst{1}\orcidID{0000-0002-7523-4827}}
\authorrunning{Kaiwen Zhang et al.}
%
\institute{Tongji University, Shanghai, China \email{\{zhangkw,liuguanjun\}@tongji.edu.cn}}
\maketitle              
\begin{abstract}
Safe Rust guarantees memory safety through strict compile-time constraints: \textit{ownership} can be transferred, \textit{borrowing} can temporarily guarantee either shared read-only or exclusive write access, and ownership and borrowing are scoped by \textit{lifetime}. 
Automatically synthesizing correct and safe Rust code is challenging, as the generated code must not only satisfy ownership, borrowing, and lifetime constraints, but also meet \textit{type} and \textit{interface} requirements at compile time.
This work proposes a synthesis method based on our newly defined Pushdown Colored Petri Net (PCPN) that models these compilation constraints directly from public API signatures to synthesize valid call sequences. Token colors encode dynamic resource states together with a scope level indicating the lifetime region in which a \textit{borrow} is valid. The pushdown stack tracks the entering or leaving of \textit{lifetime parameter} via pushing and popping tokens. A transition is enabled only when type matching and interface obligations both hold and the required resource states are available.
Based on the bisimulation theory, we prove that the enabling and firing rules of PCPN are consistent with the \textit{compile-time check} of these three constraints. We develop an automatic synthesis tool based on PCPN and the experimental results show that the synthesized codes are all correct.

\keywords{Safe Rust \and Colored Petri Nets \and  Type Systems}
\end{abstract}

\section{Introduction}
Safe Rust turns many programming decisions into compile time checks \cite{matsakis2014rust}. A client program compiles only when it respects \textit{ownership} transfer (i.e., moves), \textit{borrowing} exclusivity, and \textit{lifetime} scoping \cite{ho2022aeneas}. These checks are local to a compilation unit and enforced without running our concerned program \cite{jung2019stacked,jung2017rustbelt,villani2025tree}. However, they are central obstacles for automated code generation. A candidate call sequence of automatically-generated code can be \textit{type} correct at a coarse level but be rejected at compile time because a value is used after its ownership has been moved, or a mutable borrow overlaps with another one, or a generic constraint has no valid instantiation. 

Petri nets are a kind of formal models that are used to analyze  reachability problems in concurrent systems \cite{2024ITSMC..54.3908D,kosaraju1982decidability,rackoff1978covering,reps1995precise,mengchu1}, while the compile-time check problem of automatically-generated code is exactly represented the reachability problem of Petri nets. Tokens represent type resources in a program, transitions represent actions such as function call, and reachability captures whether a sequence of actions can be executed or not. Every public Rust API is  viewed as an atomic transformer specified solely by its signature and then modeled by a transition of a Petri net. Consequently, if a transition sequence is firable and satisfies that a multiset of available values is transferred into a multiset that contains the desired result \cite{mengchu2}, then we can backtrack this firable sequence into a compilable Rust code.

Classical place/transition nets are not expressive enough for \textit{safe Rust}. The first limitation is Rust typing. Rust signatures carry parametric types, references, and projections that must match across call sites \cite{deline2004typestates,feng2017component}. The second limitation is permission evolution. Borrowing does not merely read a token, it changes what can be done with the referent while the borrow is live. The third limitation is lifetime scope. A borrow is valid only within a lifetime region, and regions must respect \textit{outlives relationship} (i.e., ordering constraints between lifetimes) induced by the program structure. In essence, classical nets treat tokens as identity-less  units, making them incapable of tracking the generic types, fluid permissions of references, or hierarchical outlives relationships that define Rust's safety guarantees \cite{astrauskas2019leveraging,jung2019stacked,jung2017rustbelt,vanhattum2022verifying,villani2025tree}.

We address these issues by constructing a Pushdown Colored Petri Net (PCPN) from a compiler extracted environment $\Sigma(\Crate)$ \cite{5533237,jensen1987coloured,jensen2007coloured}. 
Places are indexed by concrete types and by a small permission capability that approximates the borrow checker view of a live \textit{binding}. Transitions come from two sources. The first source is the set of callable items extracted from signatures, including functions and methods. The second source is a fixed family of structural steps that model ownership, bounded duplication, borrow creation and discharge, projections, and reborrowing. 
Token colors carry value identifiers together with instantiations of type and \textit{lifetime parameters}. Internally created references carry an explicit region label that denotes the concrete region in which the borrow remains valid. A pushdown stack stores a well scoped record of \textit{outstanding borrows} \cite{alur2004visibly,10.1145/2568225.2568311}, such that borrow creation and termination correspond to push and pop actions, respectively. A transition can fire only when its input tokens match its expected \textit{type schemes} and when all interface obligations are satisfied. These obligations include trait bounds, associated type equalities, and outlives relationship over lifetime parameters, which are checked during guarded enabling by using finite ground facts extracted from $\Sigma(\Crate)$.

To justify this encoding, we define a small-step signature-induced resource semantics. The semantics abstracts each API call as a state transformer admitted by the signature environment. It also gives explicit rules for ownership transfer, borrowing, reborrowing, and lifetime discharge. A strong bisimulation is established between this semantics and the firing semantics of a PCPN. As a consequence, an API call trace is compilable exactly when it corresponds to a firing sequence that reaches a closed configuration whose borrow stack is empty. To ensure decidability, we obtain an effective finite search space by fixing explicit bounds on token counts and borrow nesting depth \cite{bouajjani1997reachability,rackoff1978covering,solar2006combinatorial}. By quotienting configurations via a canonicalization function, the reachability graph is saturated through worklist exploration, allowing firing sequences to be backtracked and emitted as syntactically valid Rust snippets. We implement a prototype tool based on PCPN, which is publicly available at \url{https://github.com/kevindadi/RustSynth}.

The new contributions we intend to make are as follows:
\begin{enumerate}
  \item We formulate a signature-induced semantics for safe Rust that captures ownership transfer, borrowing, and lifetime ordering. Notably, this semantics enables generic constraint solving without requiring the inspection of function bodies;
  \item We provide a method to directly construct PCPN is constructed directly from $\Sigma(\Crate)$ by utilizing places indexed by types and capabilities, alongside a dedicated borrow stack to enforce properly nested lifetime regions;
  \item We integrate type matching and constraint discharge into guarded enabling through a unified judgment. This mechanism can perform unification while verifying trait bounds, associated types, and outlives relationships;
  \item We establish step correspondence and strong bisimulation between the resource semantics and PCPN, yielding an exact reachability characterization of compilable API call traces; and 
  \item Under explicit bounds, we develop a finite canonical reachability graph. This structure supports effective witness  (i.e., reconstructing valid API call sequences) extraction and the emission of trusted, syntactically valid Rust snippets.
\end{enumerate}

Section~\ref{sec:background} reviews the Rust features and the signature induced semantic interface used by our net construction. 
Section~\ref{sec:method} defines PCPN and its guarded enabling.
Section~\ref{sec:method:construct} gives the construction process of PCPN and the bisimulation theorem.
Section~\ref{sec:analysis} develops bounded reachability, canonical quotienting, and witness oriented search.
Section~\ref{sec:conlusion} concludes this work.

\section{Background}
\label{sec:background}

\subsection{Rust ownership, borrowing, and lifetimes}
\label{sec_background_rust}
This paper targets Safe Rust, a programming paradigm that enforces memory safety through a combination of static analysis and \textit{affine} type systems. Unlike conventional languages, Rust manages resources without a garbage collector by relying on the principles of ownership, borrowing, and lifetimes.

At the core of Safe Rust is the concept of ownership, which dictates that every value in Rust has a unique owner, known as a binding. For types that do not implement the \texttt{Copy} trait, such as strings or complex data structures, Rust enforces move semantics. When such a value is passed as an argument or assigned to a new variable, its ownership is transferred to the destination, and the source binding is immediately invalidated. This prevents multiple paths from attempting to manage the same memory simultaneously, effectively modeling values as unique resources. In contrast, \texttt{Copy} types allow for implicit duplication, where the value is cloned rather than moved, leaving the source binding active and usable \cite{girard1987linear,tov2011practical,wadler1990linear}.

While ownership ensures clear resource disposal, borrowing provides a mechanism for temporary access to a value through references without transferring its ownership. To maintain safety, Rust imposes a strict aliasing discipline akin to a readers-writer lock: a value may have either any number of shared references (\texttt{\&T}), which grant read-only access, or exactly one mutable reference (\texttt{\&mut T}), which allows exclusive write access. The compiler’s borrow checker ensures that these two states are mutually exclusive. For instance, no value can be modified while it is being read by others. This rule prevents data races at compile time and ensures that references always point to valid memory.

The validity of these references is further constrained by \textit{lifetimes}. A lifetime defines the region of code in which a borrow remains valid. A fundamental invariant is that a reference must never outlive the referent it points to. Although modern Rust utilizes \textit{None Lexical Lifetimes} (NLL) to allow borrows to end precisely at their last point of use, our synthesis approach adopts a more structured discipline. By inserting explicit \texttt{drop} calls to terminate borrows, we ensure that their lifetimes are properly nested. This structured approach is sound under the broader NLL rules and provides a predictable execution trace for code generation.

Building upon this borrowing mechanism, reborrowing allows a program to derive a new, shorter-lived reference from an existing one. This is particularly common in method calls where a mutable reference is temporarily downgraded or restricted to a specific field. Crucially, the parent reference is suspended and becomes inaccessible while the child reborrow is live. Once the child reference is dropped, the parent reference regains its original permissions. This hierarchical suspension and resumption of access is essential for representing the complex call sequences found in standard Rust libraries as valid compilable code.

Fig.~\ref{fig:rust_core_rules} illustrates ownership transfer, where $s$ is invalidated after moving to $t$, and the borrowing exclusivity rule, which allows the mutable borrow $m$ only after the shared borrow $r$ is explicitly dropped. It also shows explicit termination of borrows.
\begin{figure}[htbp]
\centering
\begin{lstlisting}[language=Rust]
let s = R(1); 
let t = s;        // Ownership moved from s to t
// let u = s;     // Illegal: s is now moved
let mut x = 0;
let r = &x;        // Shared borrow starts
// let m = &mut x; // Illegal: mutable borrow conflicts with r
drop(r);           // Explicit end of borrow
let m = &mut x;    // Legal: exclusive access granted
\end{lstlisting}
\caption{A move and the shared and mutable borrow rules.}
\label{fig:rust_core_rules}
\end{figure}

\subsection{A Small Signature-Driven Core Language}
\label{sec_background_core}
To bridge the gap between Rust source code and Petri net transitions  \cite{jensen1987coloured,liu2022petri}, we define a core syntax that distills API interactions into a sequence of atomic resource transformers. We represent the synthesized client code as a program $\textsf{prog}$, consisting of a sequence of statements $\textsf{stmt}$ defined as follows:

\[
\begin{aligned}
s \in \textsf{stmt} \triangleq\ \ &
\textsf{let } x \textsf{ = Call}(f, \bar{x}) \mid \textsf{drop}(x) \\
\mid\ & \textsf{let } r \textsf{ = Borrow}^{\{shr, mut\}}(x) \mid \textsf{let } r' \textsf{ = Reborrow}^{\{shr, mut\}}(r) \\
\mid\ & \textsf{let } y \textsf{ = ProjMove}(x, f) \mid \textsf{let } r' \textsf{ = ProjRef}(r, f) \mid \textsf{let } y \textsf{ = Deref}(r)
\end{aligned}
\]
Each construct abstracts a fundamental transition in the Rust resource model. \textsf{Call} represents function or method invocations where ownership effects are determined by $\Sigma(\mathcal{C})$. \textsf{Borrow} and \textsf{Reborrow} model the creation of new reference tokens. The latter specifically accounts for the temporary suspension of parent references. \textsf{ProjMove} and \textsf{ProjRef} handle access to struct fields or tuple components, ensuring that the synthesizer can reason about sub-resources within complex data structures. \textsf{Deref} accounts for duplicating values of \texttt{Copy} types through references, yielding an independent owned value without affecting the reference's usability. In contrast, \textsf{Reborrow} models the derivation of a new reference from an existing one; this process imposes a hierarchical dependency where the parent reference is temporarily suspended until the child reference is dropped, thereby maintaining strict aliasing invariants during nested calls.
Finally, \textsf{drop}(x) serves a dual purpose: for owned values, it models resource deallocation; for references, it acts as an explicit signal to terminate a borrow's lifetime. This explicit termination is critical for our formal model to unfreeze a owner and restore its access permissions, ensuring that the synthesized code respects the LIFO (Last-In-First-Out) stack discipline of the underlying PCPN \cite{matsakis2014rust}. 

This syntax is sufficient for API-level reasoning \cite{gabbay2002new,weiss2019oxide}, as all library interactions are encapsulated in \textsf{Call} statements. Ownership effects are induced by the parameter types (e.g., \texttt{Copy} vs. non-\texttt{Copy}), while borrow effects are managed through explicit reference creation and termination. The set of live regions at any point in the synthesized program is therefore deterministically governed by the set of active values and their corresponding borrow structure.

\section{Pushdown Colored Petri Nets}
\label{sec:method}
This section formalizes our synthesis approach. We bridge the gap between Rust's rich static semantics and the reachability analysis of Petri nets by transforming the compiler-extracted signature environment $\Sigma(\mathcal{C})$ into a PCPN. The core challenge lies in mapping Rust's potentially infinite type space due to generics and lifetimes into a finite net structure. We address this by distinguishing between \textit{type schemes} appearing in signatures and \textit{ground types} indexing Petri net places. The resulting PCPN models the multiset of live values as tokens and enforces borrowing discipline via a pushdown stack. A valid firing sequence in this net corresponds to a well-typed, borrow-safe Rust code snippet.

\subsection{Problem Setting and Type Formalization}
\label{sec:method:setting}
Our synthesis target is defined by a finite set $\mathcal{F}$ of callable items extracted from $\Sigma(\mathcal{C})$. Each item $f \in \mathcal{F}$ is associated with a signature containing generic type parameters, lifetime constraints, and trait obligations. Formally, let $\mathsf{GTy}$ be the set of monomorphic types (e.g., \texttt{u8} and \texttt{bool}) and nominal struct types extracted from the crate. Let $\mathsf{TVar}$ and $\mathsf{LVar}$ be countable sets of generic type variables and lifetime variables, respectively. To handle lifetime instantiation dynamically during synthesis, we introduce a special reserved lifetime variable $\ell_{\bullet} \notin \mathsf{LVar}$, which serves as a runtime hook to carry concrete region labels generated by the net inside token colors.

\begin{definition}[Type Schemes]
\label{def:pcpn:types}
The set of type schemes $\mathsf{Ty}$ represents the polymorphic types found in API signatures. It is generated by the grammar:
\begin{align*}
\tau \in \mathsf{Ty} ::=~&
g
\mid \alpha
\mid \tau_0\langle \tau_1,\dots,\tau_n\rangle
\mid (\tau_1,\dots,\tau_n)
\mid [\tau]
\mid \mathsf{Ref}^q_{\ell}(\tau) \\
&\mid \mathsf{Assoc}(\tau,\mathsf{Tr},A)
\mid \mathsf{Field}(\tau,f)
\end{align*}
where $g\in\mathsf{GTy}$, $\alpha\in\mathsf{TVar}$, $\ell \in \mathsf{LVar} \cup \{\ell_{\bullet}\}$, and $q \in \{\mathsf{shr}, \mathsf{mut}\}$. 
Constructs $\mathsf{Assoc}$ and $\mathsf{Field}$ are symbolic representations of associated types and field projections, resolved during unification.
\end{definition}

While $\mathsf{Ty}$ captures the expressiveness of signatures, it is too large to index the places of a Petri net. To ensure that PCPN remains finitely branching, we restrict the state space to a finite universe of \textit{ground types}. Let $\widehat{\mathsf{Ty}} \subset \mathsf{Ty}$ be a finite set of instantiated, lifetime types. It is defined as the least set satisfying: (i) Base types $\mathsf{GTy} \subseteq \widehat{\mathsf{Ty}}$; (ii) Compound types (tuples and slices) are in $\widehat{\mathsf{Ty}}$ only if their components are in $\widehat{\mathsf{Ty}}$; (iii) Reference types $\mathsf{Ref}^q_{\ell}(\hat\tau) \in \widehat{\mathsf{Ty}}$ if $\hat\tau \in \widehat{\mathsf{Ty}}$, where $\ell$ can be a variable or the concrete hook $\ell_{\bullet}$; (iv) Field types $\hat\tau_f \in \widehat{\mathsf{Ty}}$ if the parent struct $\hat\tau \in \widehat{\mathsf{Ty}}$ and field $f$ are visible. Crucially, $\widehat{\mathsf{Ty}}$ contains no free type variables ($\mathsf{TVar}$) and no unresolved projections ($\mathsf{Assoc}$ and $\mathsf{Field}$), but it retains lifetime variables to support schematic matching. All PCPN places will are strictly by $\widehat{\mathsf{Ty}}$.

We model resource availability using multisets. For a set $X$, $\mathbb{N}^{X}$ denotes the set of finite multisets $m: X \to \mathbb{N}$. We define standard operations: (i) Addition: $(m_1\oplus m_2)(x) := m_1(x)+m_2(x)$; (ii) Difference: $(m_1\ominus m_2)(x) := \max(m_1(x)-m_2(x),0)$; (iii) Inclusion: $m_1\subseteq m_2 \iff \forall x,\ m_1(x)\le m_2(x)$. These operations extend pointwise to mappings $M: P \to \mathbb{N}^{X}$.

\subsection{PCPNs}
\label{sec:method:pcpn}
We define a PCPN whose places are indexed by type and whose stack records outstanding references. Before defining the net structure, we fix the universes of identifiers and structure of solver evidences. Let $\mathbb{V}$ be a countably infinite set of value identifiers and $\mathbb{L}$ be a countably infinite set of region labels.
Let $\Theta$ be the set of finite type substitutions $\theta:\mathsf{TVar}\rightharpoonup\widehat{\mathsf{Ty}}$ and $\Lambda$ the set of finite lifetime valuations $\lambda:(\mathsf{LVar}\cup\{\ell_{\bullet}\})\rightharpoonup\mathbb{L}$.

A transition firing is justified by an \textit{instantiation record} $\iota \in \mathsf{Inst}$. This record captures how generic parameters are resolved and which fresh names are allocated. Formally, $\iota$ is a tuple: $\iota = \langle \theta, \lambda, \nu, \mu \rangle$ where $\theta \in \Theta$ and $\lambda \in \Lambda$ resolve the transition's static parameters, while $\nu: \mathsf{Vars} \to \mathbb{V}$ and $\mu: \mathsf{Vars} \to \mathbb{L}$ provide fresh identifiers and labels for the output tokens. The pushdown stack records outstanding borrows. We define the stack alphabet $\Gamma$ as: $\Gamma ~::=~ \mathsf{Freeze}(v)\mid \mathsf{Shr}(v_o,v_r,L)\mid \mathsf{Mut}(v_o,v_r,L)$ where $v, v_o, v_r \in \mathbb{V}$ and $L \in \mathbb{L}$. We write $\Gamma^*$ for the set of finite strings over $\Gamma$. A stack instance $S \in \Gamma^*$ implies a sequence of frames, and $\bar\gamma \in \Gamma^*$ denotes a segment of the stack pushed or popped atomically.

\begin{definition}[Pushdown Colored Petri Net]
\label{def:pcpn}
A PCPN is a tuple
\begin{equation*}
   \mathcal{N}_{\mathcal{C}}=(P,T,W,\mathsf{Col},\Gamma,M_0,\mathsf{Guard},\mathsf{Out},\mathsf{Act}), 
\end{equation*}
where: $P$ is a finite set of places indexed by capabilities and ground types: $P:=\{\,p^{\kappa}_{\hat\tau}\mid \kappa\in \{\mathsf{own},\mathsf{frz},\mathsf{blk}\},~\hat\tau\in\widehat{\mathsf{Ty}}\,\}$; $T$ is a finite set of transitions, comprising API-derived call rules and fixed structural rules; $W:(P\times T)\cup(T\times P)\to\mathbb{N}$ is the flow weight function; $\mathsf{Col} := \mathbb{V}\times\Theta\times\Lambda$ is the set of token colors. A token $c=\langle v,\theta,\lambda\rangle$ carries a value ID and its instantiation context; $M_0:P\to\mathbb{N}^{\mathsf{Col}}$ is the initial marking; $\mathsf{Guard}(t): (\langle M,S\rangle, \chi, \iota) \mapsto \mathsf{Prop}$ is a predicate checking if transition $t$ is enabled under configuration $\langle M,S\rangle$, input choice $\chi$, and instantiation record $\iota \in \mathsf{Inst}$; $\mathsf{Out}(t): (\langle M,S\rangle, \chi, \iota) \mapsto \pi$ defines the multiset of output tokens $\pi \in \mathbb{N}^{\mathsf{Col}}$ produced by the firing; and $\mathsf{Act}(t) \in \{\epsilon, \mathsf{push}(\bar\gamma), \mathsf{pop}(\bar\gamma)\}$ defines the stack operation, where $\bar\gamma \in \Gamma^*$ is a string of borrow frames.
\end{definition}
Let $M := P \to \mathbb{N}^{\mathsf{Col}}$ be the set of colored markings. For $p\in P$ and a multiset $m\in\mathbb{N}^{\mathsf{Col}}$, define the \emph{placed multiset}
$p\langle m\rangle \in M$ by
$(p\langle m\rangle)(p)=m$ and $(p\langle m\rangle)(p')=\emptyset$ for $p'\neq p$.
For a single token $c\in\mathsf{Col}$, write $p\langle c\rangle$ for $p\langle [c]\rangle$.

Although the set of places effectively triples the number of ground types, this linear expansion is deliberate. By encoding the borrowing states structurally into places rather than token colors, we enforce access control via net topology rather than complex guard predicates. This significantly simplifies the enabling check for transitions. Furthermore, since references are treated as first-class types in $\widehat{Ty}$, the model naturally supports nested borrows without additional rules.

\begin{definition}[Configurations]
\label{def:pcpn:config}
A \emph{configuration} of PCPN is a pair $\Cfg=\langle M,S\rangle$, where 
$M: P \to \mathbb{N}^{\mathsf{Col}}$ is a colored marking mapping each place to a multiset of tokens, and  $S \in \Gamma^{*}$ is a stack of borrow records with the top element at the leftmost position.
The initial configuration is $\Cfg_0 = \langle M_0, \epsilon \rangle$.
\end{definition}

\begin{definition}[Enabling]
\label{def:pcpn:enabling}
Let $\Cfg=\langle M,S\rangle$ be a configuration. A transition $t \in T$ is enabled at $\Cfg$ if there exist: an \textit{input choice} $\chi: P \to \mathbb{N}^{\mathsf{Col}}$ such that $\chi \subseteq M$ and $|\chi(p)| = W(p,t)$ for all $p \in P$; and an instantiation record $\iota \in \mathsf{Inst}$ providing variable valuations and fresh identifiers.

The pair $\varphi = \langle t, \chi, \iota \rangle$ constitutes an enabled \emph{firing instance} if the following conditions hold that Guard Satisfaction, i.e., $\mathsf{Guard}(t)(\Cfg, \chi, \iota)$ evaluates to true; and Stack Discipline, i.e., if $\mathsf{Act}(t) = \mathsf{pop}(\bar\gamma)$, then $S = \bar\gamma \cdot S_{rem}$ for some remainder $S_{rem} \in \Gamma^*$.
\end{definition}

\begin{definition}[Firing Rule]
\label{def:pcpn:firing}
An enabled firing instance $\varphi = \langle t, \chi, \iota \rangle$ transforms a configuration $\Cfg = \langle M, S \rangle$ into a new configuration $\Cfg' = \langle M', S' \rangle$, denoted as $\Cfg [\varphi\rangle \Cfg'$, according to the following rules:
\begin{itemize}
    \item[1.] Token Flow: $M' = (M \ominus \chi) \oplus \pi$, where $\pi = \mathsf{Out}(t)(\Cfg, \chi, \iota)$ is the multiset of produced tokens; and
    \item[2.] Stack Update: $S'$ is derived from $S$ by $\mathsf{Act}(t)$:
    \[
    S' = \begin{cases}
        S & \text{if } \mathsf{Act}(t) = \epsilon \\
        \bar\gamma \cdot S & \text{if } \mathsf{Act}(t) = \mathsf{push}(\bar\gamma) \\
        S_{rem} & \text{if } \mathsf{Act}(t) = \mathsf{pop}(\bar\gamma) \text{ and } S=\bar\gamma \cdot S_{rem}
    \end{cases}
    \]
\end{itemize}
A finite sequence of firing instances $\varphi_1 \dots \varphi_n$ is a \emph{valid trace} if there exist configurations $\Cfg_0, \dots, \Cfg_n$ such that $\Cfg_{i-1} [\varphi_i\rangle \Cfg_i$ for all $1 \le i \le n$.
\end{definition}
In standard CPNs, a transition is enabled simply by matching token colors to arc expressions (e.g., matching an integer value). However, in our setting, transitions represent \textit{polymorphic} Rust functions. A single transition schema does not describe a fixed transformation on ground values but rather a \textit{template} valid for infinite instantiations of \textit{Generic} \texttt{T}. Consequently, determining whether a transition is enabled is not merely a check of token availability but a constraint solving problem. The net must deduce the concrete types of generic parameters from the input tokens (Unification) and verify that these types satisfy the function's interface contracts (Obligation Discharge). The following section formalizes this process by refining the standard notion of binding elements via a structured constraint-satisfaction mechanism..

\subsection{Binding Elements and Guarded Enabling}
\label{sec:method:binding}
A subtle but critical aspect of our model is the connectivity between generic transitions and places indexed by \textit{ground types} (i.e., $\widehat{\mathsf{Ty}}$). Since places in $P$ are indexed by ground types $\widehat{\mathsf{Ty}}$, while a generic transition $t$ specifies input requirements as type schemes containing variables (e.g., $\alpha \in \mathsf{TVar}$), the arc weights $W(p, t)$ cannot be statically fixed for all generic instantiations. Instead, we interpret the connectivity \textit{symbolically}. The input choice $\chi$ in Def.~\ref{def:pcpn:enabling} represents a \textit{candidate connection}. The validity of connecting a place $p^{\kappa}_{\hat\tau}$ to a transition input port expecting type scheme $\tau_{in}$ is dynamically determined by the unification guard: $ \Sigma \vdash \tau_{in} \unify \hat\tau \Rightarrow (\sigma, \dots)$. Here, the ground type $\hat\tau$ serves as the evidence that constrains and resolves the type variable $\alpha$ in $\tau_{in}$. Thus, places do participate in binding: they act as the source of ground truth that collapses the generic template into a specific instance.

The compilation environment $\Sigma(\mathcal{C})$ provides the ground truth for checking validity. We verify constraints against two fact tables derived from the crate:
\begin{itemize}
    \item[1.] $\mathsf{ImplFact}_{\Sigma} \subseteq \widehat{\mathsf{Ty}} \times \mathsf{Trait}$: The set of known trait implementations. $(\hat\tau, \mathsf{Tr}) \in \mathsf{ImplFact}_{\Sigma}$ denotes that type $\hat\tau$ implements trait $\mathsf{Tr}$.
    \item[2.] $\mathsf{AssocFact}_{\Sigma} \subseteq \widehat{\mathsf{Ty}} \times \mathsf{Trait} \times \mathsf{Name} \times \widehat{\mathsf{Ty}}$: The table of associated types. $(\hat\tau, \mathsf{Tr}, A, \hat\tau_A) \in \mathsf{AssocFact}_{\Sigma}$ records that $\langle \hat\tau \text{ as } \mathsf{Tr} \rangle::A$ resolves to $\hat\tau_A$. 
\end{itemize}
Due to Rust's coherence rules, associated type resolution is functional. We denote the lookup as $\mathsf{AssocTy}_{\Sigma}(\hat\tau, \mathsf{Tr}, A) = \hat\tau_A$.

The guard of a transition generates a set of \textit{obligations} $\mathcal{O}$ that must be entailed by the environment. The syntax of an obligation atom $o \in \mathsf{Obl}$ is defined as:
\begin{equation*}
    o ~::=~ \tau : \mathsf{Tr} ~\mid~ \mathsf{Assoc}(\tau, \mathsf{Tr}, A) = \tau' ~\mid~ \ell_1 : \ell_2
\end{equation*}
Here, $\ell_1 : \ell_2$ represents the outlives relationship ($\ell_1 \sqsupseteq \ell_2$).

\subsubsection{Phase 1: Unification via Observation.}
Unification matches the \textit{expected type scheme} of a transition's input port against the \textit{observed ground type} of a selected token. This process produces a partial substitution $\sigma$ and may emit additional equality obligations $\mathcal{O}_u$ (e.g., when matching associated types). A \emph{substitution record} is a pair $\sigma=\langle\theta,\lambda\rangle$ where
$\theta:\mathsf{TVar}\rightharpoonup\widehat{\mathsf{Ty}}$ and
$\lambda:(\mathsf{LVar}\cup\{\ell_{\bullet}\})\rightharpoonup\mathbb{L}$ are finite partial maps. Two finite maps $m_1$ and $m_2$ are \emph{compatible}, written $m_1 \# m_2$, if
$\forall x\in\dom(m_1)\cap\dom(m_2).\, m_1(x)=m_2(x)$.
If compatible, we define their join $m_1\sqcup m_2$ as the union map.
Then we lift this pointwise to substitution records:
\begin{equation*}
    \langle\theta_1,\lambda_1\rangle \sqcup \langle\theta_2,\lambda_2\rangle
:= \langle \theta_1\sqcup\theta_2,\ \lambda_1\sqcup\lambda_2\rangle
\quad(\text{defined iff }\theta_1\#\theta_2\ \text{and }\lambda_1\#\lambda_2).
\end{equation*}
We extend join to pairs $(\sigma,\mathcal{O})$ by $(\sigma_1,\mathcal{O}_1)\bigsqcup(\sigma_2,\mathcal{O}_2) := (\sigma_1\sqcup\sigma_2,\ \mathcal{O}_1\cup\mathcal{O}_2)$, undefined if $\sigma_1$ and $\sigma_2$ are incompatible. For a finite family, $\bigsqcup_i(\sigma_i,\mathcal{O}_i)$ denotes iterated join.

First, we define the auxiliary function $\fvL(\tau)$ which extracts the set of lifetime labels occurring in type $\tau$. It is defined inductively:
\begin{align*}
\fvL(g)&=\emptyset &
\fvL(\alpha)&=\emptyset\\
\fvL(\tau_0\langle\tau_1,\dots,\tau_n\rangle)
&=\bigcup_{i=0}^n \fvL(\tau_i) &
\fvL((\tau_1,\dots,\tau_n))
&=\bigcup_i\fvL(\tau_i)\\
\fvL([\tau])&=\fvL(\tau) &
\fvL(\mathsf{Ref}^q_{\ell}(\tau))
&=\{\ell\}\cup\fvL(\tau)\\
\fvL(\mathsf{Assoc}(\tau,\mathsf{Tr},A))&=\fvL(\tau) &
\fvL(\mathsf{Field}(\tau,f))&=\fvL(\tau).
\end{align*}

Let $\TokObs(p, c)$ denote the observable structure of token $c$ at place $p$. For token $c = \langle v, \theta_c, \lambda_c \rangle$ residing in $p^{\kappa}_{\hat\tau}$, we observe the ground type $\hat\tau$ and the relevant lifetime labels: $\TokObs(p^{\kappa}_{\hat\tau}, c) ~:=~ \langle \hat\tau, \lambda_c \negthinspace\upharpoonright_{\fvL(\hat\tau)} \rangle$.

We define the unification judgment $\Sigma \vdash \tau \unify \langle \hat\tau,\lambda\rangle \Rightarrow (\sigma, \mathcal{O}_u)$ by using the following inference rules:
\begin{figure*}[ht]
\centering
\small
\setlength{\lineskip}{2pt}
\renewcommand{\arraystretch}{1.0}
\mprset{sep=4pt, andskip=1.5em}

\begin{mathpar}
    \inferrule[\textsc{U-GTy}]
    { }
    { \Sigma \vdash g \unify \langle g, \lambda \rangle \Rightarrow (\langle\emptyset,\emptyset\rangle, \emptyset) }
    \and
    \inferrule[\textsc{U-TVar}]
    { }
    { \Sigma \vdash \alpha \unify \langle \hat\tau, \lambda \rangle \Rightarrow (\langle[\alpha \mapsto \hat\tau],\emptyset\rangle, \emptyset) }
    \and
    \inferrule[\textsc{U-Slice}]
    { \Sigma \vdash \tau \unify \langle \hat\tau,\lambda\rangle \Rightarrow (\sigma,\mathcal{O}) }
    { \Sigma \vdash [\tau] \unify \langle [\hat\tau],\lambda\rangle \Rightarrow (\sigma,\mathcal{O}) }

    \inferrule[\textsc{U-Field}]
    { }
    { \Sigma \vdash \mathsf{Field}(\tau,f) \unify \langle \hat\tau, \lambda \rangle
      \Rightarrow (\langle\emptyset,\emptyset\rangle,\{ \mathsf{Field}(\tau,f)=\hat\tau \}) }
    \and
    \inferrule[\textsc{U-Assoc}]
    { }
    { \Sigma \vdash \mathsf{Assoc}(\tau, \mathsf{Tr}, A) \unify \langle \hat\tau, \lambda \rangle
      \Rightarrow (\langle\emptyset,\emptyset\rangle,\{ \mathsf{Assoc}(\tau, \mathsf{Tr}, A) = \hat\tau \}) }

    \inferrule[\textsc{U-Ref}]
    { \Sigma \vdash \tau \unify \langle \hat\tau, \lambda \rangle \Rightarrow (\sigma_0, \mathcal{O}_0) \\
      \lambda(\ell') = L }
    { \Sigma \vdash \mathsf{Ref}^q_{\ell}(\tau) \unify \langle \mathsf{Ref}^q_{\ell'}(\hat\tau), \lambda \rangle
      \Rightarrow (\sigma_0 \sqcup \langle\emptyset,[\ell\mapsto L]\rangle, \mathcal{O}_0) }

    \inferrule[\textsc{U-Tuple}]
    { \forall i.\ \Sigma \vdash \tau_i \unify \langle \hat\tau_i,\lambda\rangle \Rightarrow (\sigma_i,\mathcal{O}_i) \\
      (\sigma,\mathcal{O})=\bigsqcup_i(\sigma_i,\mathcal{O}_i)}
    { \Sigma \vdash (\tau_1,\dots,\tau_n) \unify \langle (\hat\tau_1,\dots,\hat\tau_n),\lambda\rangle \Rightarrow (\sigma,\mathcal{O}) }
    \and
    \inferrule[\textsc{U-App}]
    { \Sigma \vdash \tau_0 \unify \langle \hat\tau_0,\lambda\rangle \Rightarrow (\sigma_0,\mathcal{O}_0) \\
      \forall i.\ \Sigma \vdash \tau_i \unify \langle \hat\tau_i,\lambda\rangle \Rightarrow (\sigma_i,\mathcal{O}_i) \\
      (\sigma,\mathcal{O})=\bigsqcup_{i=0}^n (\sigma_i,\mathcal{O}_i) }
    { \Sigma \vdash \tau_0\langle\tau_1,\dots,\tau_n\rangle \unify
      \langle \hat\tau_0\langle\hat\tau_1,\dots,\hat\tau_n\rangle,\lambda\rangle
      \Rightarrow (\sigma,\mathcal{O}) }
\end{mathpar}
\end{figure*}

\subsubsection{Phase 2: Entailment and Resolution.}
An instantiated obligation set $\mathcal{O}$ is satisfied if every atom is derivable from $\Sigma$ and the current stack $S$. We formalize this by using entailment rules:

\begin{mathpar}
\inferrule[\textsc{E-Trait}]
  { (\theta(\tau), \mathsf{Tr}) \in \mathsf{ImplFact}_{\Sigma} }
  { \Sigma; S \vdash \sigma(\tau : \mathsf{Tr}) }

\inferrule[\textsc{E-Life}]
  { S \vdash \lambda(\ell_1) \sqsupseteq \lambda(\ell_2) }
  { \Sigma; S \vdash \sigma(\ell_1 : \ell_2) }

\inferrule[\textsc{E-Assoc}]
  { \mathsf{AssocTy}_{\Sigma}(\theta(\tau), \mathsf{Tr}, A) = \theta(\tau') }
  { \Sigma; S \vdash \sigma(\mathsf{Assoc}(\tau, \mathsf{Tr}, A) = \tau') }
\end{mathpar}
The outlives relation $S \vdash L_1 \sqsupseteq L_2$ reflects the LIFO validity: it holds if $L_1 = L_2$ or if $L_1$ appears deeper in the stack $S$ than $L_2$ does.

\subsubsection{Phase 3: Unified Enabling Judgment.}
We can now formally define the $\SolveGuard$ predicate. A firing instance is valid if we can find a \textit{complete} instantiation record $\iota$ that extends the partial unification result and satisfies all obligations.

Let $\Type_{in}(t, p)$ be the type scheme expected by transition $t$ at input place $p$. To accommodate the complexity of the enabling condition within limited width, we present the rule by using a stacked structure:
\begin{equation*}
\label{eq:gsolve}
\frac{
  \begin{array}{@{}l}
    (1)\ \textbf{Input Matching:} \\
    \qquad \forall p \in {}^{\bullet}t, \forall c \in \chi(p):\ 
    \Sigma \vdash \Type_{in}(t, p) \unify \TokObs(p, c) \Rightarrow (\sigma_{p,c}, \mathcal{O}_{p,c}) \\[5pt]
    
    (2)\ \textbf{Consistency:} \\
    \qquad (\sigma_{forced}, \mathcal{O}_{u}) = \bigsqcup_{p,c} (\sigma_{p,c}, \mathcal{O}_{p,c}) \\[5pt]
    
    (3)\ \textbf{Completion:} \\
    \qquad \sigma^\star \succeq \sigma_{forced} \text{ is a complete binding for } \mathsf{Vars}(t) \\[5pt]
    
    (4)\ \textbf{Freshness:} \\
    \qquad \FreshOK(\langle M, S \rangle, \sigma^\star, \nu, \mu) \\[5pt]
    
    (5)\ \textbf{Entailment:} \\
    \qquad \Sigma; S \vdash \sigma^\star \models (\mathcal{O}(t) \cup \mathcal{O}_{u})
  \end{array}
}{
  \Sigma; \langle M, S \rangle \vdash_{\text{enable}} t, \chi \leadsto \langle \sigma^\star, \nu, \mu \rangle
} \ (\textsc{G-Solve})
\end{equation*}
The judgment proceeds in two logical stages: binding synthesis and validity checking.
First, the consistency step merges partial unifications from all input tokens into a forced substitution $\sigma_{forced}$ and a collected obligation set $\mathcal{O}_{u}$. The join operation $\bigsqcup$ fails if any token implies a conflicting assignment for the same variable. Since $\sigma_{forced}$ may be partial (e.g., return-type-only parameters remain unconstrained), the completion step non-deterministically selects a total extension $\sigma^\star \succeq \sigma_{forced}$. This extension must map all variables in $\mathsf{Vars}(t)$—the set of generic type and lifetime parameters declared by $t$—to valid ground types in $\widehat{\mathsf{Ty}}$ and region labels in $\mathbb{L}$. 
Second, the judgment enforces structural and semantic integrity. The freshness predicate $\FreshOK(\langle M, S \rangle, \sigma^\star, \nu, \mu)$ ensures that the new value identifiers $\nu$ and region labels $\mu$ generated for output tokens are disjoint from those currently live in marking $M$ and stack $S$. Finally, the entailment check requires that the total obligation set—combining static requirements $\mathcal{O}(t)$ from the signature and dynamic requirements $\mathcal{O}_u$ emitted by unification—is fully derivable from the environment $\Sigma$ and the current borrow stack $S$.

\section{Constructing PCPN from $\mathcal{C}$}
\label{sec:method:construct}

The construction of PCPN $\mathcal{N}_{\mathcal{C}}$ transforms the static signature environment $\Sigma(\mathcal{C})$ into a dynamic net structure. This involves two steps: monomorphizing API calls into concrete transitions, and instantiating a fixed set of structural schemas.

\subsection{API-Derived Call Transitions}
\label{sec:method:construct:api}

From $\Sigma(\mathcal{C})$, we extract a finite set of callable items $\mathcal{F}$. 
While $\mathsf{ImplFact}_{\Sigma}$ tells us which types implement which \textit{traits}, it does not create the net structure. We must explicitly generate a transition for each valid concrete instantiation of a generic function.

\begin{definition}[Call Transition Instantiation]
\label{def:call-trans}
Let $\Theta_f := \{\,\theta:\bar\alpha_f\to \widehat{\mathsf{Ty}}\,\}$ be the set of all substitutions mapping parameters of $f$ to ground types in $\widehat{\mathsf{Ty}}$.
For each $f\in\mathcal{F}$, we generate a transition $t_{f,\theta}$ for every substitution $\theta\in\Theta_f$ satisfying the following conditions: All argument and return types resolve to valid types in $\widehat{\mathsf{Ty}}$; and all static obligations $\mathcal{O}_0$ (derived from the signature of $f$) are statically entailed by $\mathsf{ImplFact}_{\Sigma}$. 
The constructed $t_{f,\theta}$ has input/output places determined by the resolved types and carries the remaining obligations in its guard.
\end{definition}

\subsection{Structural Transition Schemas}
\label{sec:method:construct:structural}

We instantiate a fixed family of structural schemas to model Rust's ownership and borrowing rules.
To ensure formal rigor, we define the auxiliary constructors used in the schemas.
Let $c = \langle v, \theta, \lambda \rangle$ be an input token. $\mathsf{NewVal}(c, v') := \langle v', \theta, \lambda \rangle$ creates a new value with the same type/lifetime context but a fresh ID. $\mathsf{NewRef}(c, v', L) := \langle v', \theta, \lambda[\ell_{\bullet} \mapsto L] \rangle$ creates a reference token with a fresh ID and a specific region label $L$.

In Tables~\ref{tab:schemas:own}--\ref{tab:schemas:deref} , $\chi(p)$ denotes the single token consumed from place $p$, and $\pi$ denotes the multiset of produced tokens. All schemas implicitly require $\FreshOK$ if they generate new IDs ($\nu, \mu$).

The transition schemas in Table~\ref{tab:schemas:own} formalize Rust's affine and linear typing discipline by defining the consumption and production of owned value tokens. Rule 1, denoted as $\mathsf{Move}_{\hat\tau}$, enforces strict linearity for non-\textsf{Copy} types. It consumes a token from $p^{\mathsf{own}}_{\hat\tau}$ and produces an empty post-multiset, effectively removing the value from the current scope to model a move operation or drop. Conversely, Rule 2, $\mathsf{CopyUse}_{\hat\tau}$, applies to types implementing the \textsf{Copy} trait. It consumes the token but immediately regenerates it in the same place, treating the value as a persistent resource that can be duplicated implicitly. The explicit creation of independent values is handled by Rules 2$^\star$ and 2$^\prime$. $\mathsf{DupCopy}$ performs a bitwise copy, while $\mathsf{DupClone}$ invokes a deep copy via the \textsf{Clone} trait. Crucially, these are the only structural rules that increase the net cardinality of tokens in the system. Therefore, both the finiteness of the state space and the termination of the reachability analysis rely primarily on restricting the firing frequency of these specific duplication schemas within the budget.

\begin{table}
\centering
\scriptsize
\setlength{\tabcolsep}{4pt}
\renewcommand{\arraystretch}{1.3}
\caption{Ownership and Duplication Schemas.}
\label{tab:schemas:own}
\begin{tabular}{l l l l}
\hline
\textbf{Rule} & \textbf{Schema Name} & \textbf{Output Multiset ($\pi$)} & \textbf{Guard Condition} \\
\hline
1 & $\mathsf{Move}_{\hat\tau}$ & $\varnothing$ & $\neg \mathsf{IsCopy}(\hat\tau)$ \\
2 & $\mathsf{CopyUse}_{\hat\tau}$ & $p^{\mathsf{own}}_{\hat\tau}\langle c\rangle$ & $\mathsf{IsCopy}(\hat\tau)$ \\
2$^\star$ & $\mathsf{DupCopy}_{\hat\tau}$ & $p^{\mathsf{own}}_{\hat\tau}\langle c\rangle \oplus p^{\mathsf{own}}_{\hat\tau}\langle c'\rangle$ & $\mathsf{IsCopy}(\hat\tau) \wedge c'=\mathsf{NewVal}(c,\nu)$ \\
2$^{\prime}$ & $\mathsf{DupClone}_{\hat\tau}$ & $p^{\mathsf{own}}_{\hat\tau}\langle c\rangle \oplus p^{\mathsf{own}}_{\hat\tau}\langle c'\rangle$ & $\hat\tau:\mathsf{Clone} \wedge c'=\mathsf{NewVal}(c,\nu)$ \\
3$^{\prime}$ & $\mathsf{DropOwn}_{\hat\tau}$ & $\varnothing$ & $\top$ \\
\hline
\end{tabular}
\end{table}

Table~\ref{tab:schemas:borrow} implements the \textit{Readers-Writer Lock} mechanism through the coordination of state places ($p^{\mathsf{frz}}, p^{\mathsf{blk}}$) and pushdown stack. The creation of shared references involves a two-phase freeze logic. Rule 4 initiates the first borrow by moving the owner token from $p^{\mathsf{own}}$ to $p^{\mathsf{frz}}$ and pushing a composite stack frame $\mathsf{Shr}(\dots) \cdot \mathsf{Freeze}(v_o)$. This $\mathsf{Freeze}(v_o)$ marker acts as a sentinel, indicating that this specific transition is responsible for suspending ownership. Subsequent aliasing borrows, defined in Rule 5, merely cycle the token within $p^{\mathsf{frz}}$ and push an $\mathsf{Shr}$ frame. The discharge process enforces LIFO validity: popping an $\mathsf{Shr}$ frame (Rule 8) leaves the owner frozen to support remaining readers, whereas popping the frame containing the sentinel (Rule 9) triggers the unfreeze operation, restoring the token to $p^{\mathsf{own}}$. For mutable borrows, Rule 6 enforces exclusivity by moving the owner to $p^{\mathsf{blk}}$ and pushing a $\mathsf{Mut}$ frame. The owner remains inaccessible until the corresponding $\mathsf{Mut}$ frame is popped in Rule 7, guaranteeing that no aliasing occurs during the mutation scope. We denote $p^{\mathsf{ref}}_{\hat\tau} := p^{\mathsf{own}}_{\mathsf{Ref}^{\mathsf{shr}}_{\ell_{\bullet}}(\hat\tau)}$ and $p^{\mathsf{mut}}_{\hat\tau} := p^{\mathsf{own}}_{\mathsf{Ref}^{\mathsf{mut}}_{\ell_{\bullet}}(\hat\tau)}$ for brevity in the table, but they refer to the standard places defined in Def.~\ref{def:pcpn}.

\begin{table}
\centering
\scriptsize
\setlength{\tabcolsep}{3pt}
\renewcommand{\arraystretch}{1.3}
\caption{Borrow Creation and Discharge Schemas}
\label{tab:schemas:borrow}
\begin{tabular}{l l l l}
\hline
\textbf{No.} & \textbf{Transition Flow} & \textbf{Stack Action ($\mathsf{Act}$)} & \textbf{Output Token Construction} \\
\hline
4 & $p^{\mathsf{own}}_{\hat\tau} \to p^{\mathsf{frz}}_{\hat\tau} \oplus p^{\mathsf{ref}}_{\hat\tau}$ & $\mathsf{push}\ \mathsf{Shr}(v_o,v_r,L) \cdot \mathsf{Freeze}(v_o)$ & $c_r = \mathsf{NewRef}(c, \nu, \mu)$ \\
5 & $p^{\mathsf{frz}}_{\hat\tau} \to p^{\mathsf{frz}}_{\hat\tau} \oplus p^{\mathsf{ref}}_{\hat\tau}$ & $\mathsf{push}\ \mathsf{Shr}(v_o,v_r,L)$ & $c_r = \mathsf{NewRef}(c, \nu, \mu)$ \\
6 & $p^{\mathsf{own}}_{\hat\tau} \to p^{\mathsf{blk}}_{\hat\tau} \oplus p^{\mathsf{mut}}_{\hat\tau}$ & $\mathsf{push}\ \mathsf{Mut}(v_o,v_r,L)$ & $c_r = \mathsf{NewRef}(c, \nu, \mu)$ \\
\hline
7 & $p^{\mathsf{blk}}_{\hat\tau} \oplus p^{\mathsf{mut}}_{\hat\tau} \to p^{\mathsf{own}}_{\hat\tau}$ & $\mathsf{pop}\ \mathsf{Mut}(v_o,v_r,L)$ & -- \\
8 & $p^{\mathsf{frz}}_{\hat\tau} \oplus p^{\mathsf{ref}}_{\hat\tau} \to p^{\mathsf{frz}}_{\hat\tau}$ & $\mathsf{pop}\ \mathsf{Shr}(v_o,v_r,L)$ & -- \\
9 & $p^{\mathsf{frz}}_{\hat\tau} \oplus p^{\mathsf{ref}}_{\hat\tau} \to p^{\mathsf{own}}_{\hat\tau}$ & $\mathsf{pop}\ \mathsf{Shr}(v_o,v_r,L) \cdot \mathsf{Freeze}(v_o)$ & -- \\
\hline
\end{tabular}
\end{table}

The structural decomposition of types is modeled in Table~\ref{tab:schemas:proj}, where ownership and borrowing rules are propagated from parent structures to their fields. Rule 10 formalizes a partial move by consuming the parent struct token and producing a token for the specific field $\tau_f$. For shared references, Rule 11 derives a field reference $\&\tau_f$ from a parent reference $\&\tau$ without modifying the stack, as the parent's frozen state implicitly covers all fields. The most complex case is the \textit{splitting borrow} formalized in Rule 12. When deriving a mutable reference to a field ($\& \mathsf{mut} \tau_f$) from a mutable parent reference ($\& \mathsf{mut} \tau$), the system treats the parent reference itself as the resource owner. The parent reference token is moved to a blocked state $p^{\mathsf{blk}}_{\mathsf{Ref}}$ to prevent simultaneous access to the rest of the struct, and a new $\mathsf{Mut}$ frame is pushed. This mechanism ensures that disjoint fields can be mutated safely only when the borrow checker's granularity allows, or prevents access when the parent is already uniquely borrowed.

\begin{table}
\centering
\scriptsize
\setlength{\tabcolsep}{3pt}
\renewcommand{\arraystretch}{1.3}
\caption{Field Projection Schemas (for field $f$ of type $\tau$)}
\label{tab:schemas:proj}
\begin{tabular}{l l l l}
\hline
\textbf{No.} & \textbf{Transition Flow} & \textbf{Stack Action} & \textbf{Token/Guard} \\
\hline
10 & $p^{\mathsf{own}}_{\tau} \to p^{\mathsf{own}}_{\tau_f}$ & $\epsilon$ & $c_f = \mathsf{NewVal}(c, \nu)$ \\
11 & $p^{\mathsf{ref}}_{\tau} \to p^{\mathsf{ref}}_{\tau} \oplus p^{\mathsf{ref}}_{\tau_f}$ & $\epsilon$ & $c_c = \mathsf{NewVal}(c, \nu)$ \\
12 & $p^{\mathsf{mut}}_{\tau} \to p^{\mathsf{blk}}_{\mathsf{Ref}^{\mathsf{mut}}(\tau)} \oplus p^{\mathsf{mut}}_{\tau_f}$ & $\mathsf{push}\ \mathsf{Mut}(\dots)$ & $c_c = \mathsf{NewRef}(c, \nu, \mu)$ \\
13 & $p^{\mathsf{blk}}_{\mathsf{Ref}^{\mathsf{mut}}(\tau)} \oplus p^{\mathsf{mut}}_{\tau_f} \to p^{\mathsf{mut}}_{\tau}$ & $\mathsf{pop}\ \mathsf{Mut}(\dots)$ & (Stack match) \\
\hline
\end{tabular}
\end{table}

Table~\ref{tab:schemas:deref} defines operations performed through references, focusing on the mechanism of reborrowing essential for nested calls. Rule 14 allows reading a value from a shared reference only if the referent type is \textsf{Copy}, producing a new value token while maintaining the reference's liveness. Rules 15 and 16 formalize explicit reborrowing, where a mutable reference is temporarily borrowed out to create a shorter-lived sub-reference either mutable or shared. In this process, the original reference token is moved to $p^{\mathsf{blk}}_{\mathsf{Ref}}$, and a new reference token is generated with a fresh ID. The stack tracks the lifetime of this sub-borrow. When the corresponding frame is popped (Rules 15$'$ and 16$'$), the original reference is unblocked and returned to $p^{\mathsf{own}}_{\mathsf{Ref}}$.

\begin{table}
\centering
\scriptsize
\setlength{\tabcolsep}{3pt}
\renewcommand{\arraystretch}{1.3}
\caption{Dereference and Reborrowing Schemas}
\label{tab:schemas:deref}
\begin{tabular}{l l l l}
\hline
\textbf{No.} & \textbf{Transition Flow} & \textbf{Stack Action} & \textbf{Token/Guard} \\
\hline
14 & $p^{\mathsf{ref}}_{\tau} \to p^{\mathsf{ref}}_{\tau} \oplus p^{\mathsf{own}}_{\tau}$ & $\epsilon$ & $\mathsf{IsCopy}(\tau) \wedge c' = \mathsf{NewVal}(c, \nu)$ \\
15 & $p^{\mathsf{mut}}_{\tau} \to p^{\mathsf{blk}}_{\mathsf{Ref}^{\mathsf{mut}}(\tau)} \oplus p^{\mathsf{mut}}_{\tau}$ & $\mathsf{push}\ \mathsf{Mut}(\dots)$ & $c' = \mathsf{NewRef}(c, \nu, \mu)$ \\
15' & $p^{\mathsf{blk}}_{\mathsf{Ref}^{\mathsf{mut}}(\tau)} \oplus p^{\mathsf{mut}}_{\tau} \to p^{\mathsf{mut}}_{\tau}$ & $\mathsf{pop}\ \mathsf{Mut}(\dots)$ & (Stack match) \\
16 & $p^{\mathsf{mut}}_{\tau} \to p^{\mathsf{blk}}_{\mathsf{Ref}^{\mathsf{mut}}(\tau)} \oplus p^{\mathsf{ref}}_{\tau}$ & $\mathsf{push}\ \mathsf{Mut}(\dots)$ & $c' = \mathsf{NewRef}(c, \nu, \mu)$ \\
16' & $p^{\mathsf{blk}}_{\mathsf{Ref}^{\mathsf{mut}}(\tau)} \oplus p^{\mathsf{ref}}_{\tau} \to p^{\mathsf{mut}}_{\tau}$ & $\mathsf{pop}\ \mathsf{Mut}(\dots)$ & (Stack match) \\
\hline
\end{tabular}
\end{table}

\subsection{Correctness and Bisimulation}
\label{sec:method:correctness}
We establish the correctness of our synthesis approach by proving a strong bisimulation between PCPN and Symbolic Rust Semantics (SRS) defined in Section~\ref{sec_background_core}. This verification relies on the alignment between the open-ended nature of the Rust language and the finite structure of the constructed net.

While the syntax of SRS allows for arbitrary Rust programs, PCPN is constructed specifically to explore the state space reachable via the finite API $\Sigma(\mathcal{C})$. To make verification well-posed, we must restrict the domain of SRS to the finite universe established during net construction.

Recall that $\widehat{\mathsf{Ty}}$ is the finite set of ground types instantiated from the generic signatures of $\mathcal{C}$. We impose the Closed World Assumption on SRS:
\begin{enumerate}
    \item The program $\textsf{prog}$ may only declare variables and operate on resources having types $\tau \in \widehat{\mathsf{Ty}}$; and
    \item We model primitive literals (e.g., integers) as explicit 0-ary constructors in the callable set $\mathcal{F}$. This ensures that the creation of primitive values in SRS corresponds to explicit firing transitions in PCPN.
\end{enumerate}
Under the above assumptions, SRS and PCPN operate over the same finite universe of types.

We relate SRS and PCPN transitions through the set of labels $\mathsf{Lab}$, which includes API calls $f@\sigma$ and structural schema names. Since both systems allocate fresh identifiers dynamically, we compare configurations modulo $\beta$-renaming, a bijective renaming of value IDs and region labels, denoted $\Cfg_1 \equiv_\beta \Cfg_2$. We formalize the components of an SRS rule. For a label $l \in \mathsf{Lab}$ and an instantiation $\iota$, we denote: $\mathsf{PreCond}_\Sigma(l, \Cfg, \iota)$: The logical conjunction of all preconditions in the SRS rule for $l$ including resource presence, type equality, trait bounds, and freshness. $\mathsf{Consume}_l(\Cfg, \iota)$ removes a multiset of resources (tokens) from the configuration by rule $l$. $\mathsf{Produce}_l(\Cfg, \iota)$ adds a multiset of resources and  pushes/pops the stack frames by rule $l$. We write $\mathsf{SRS\text{-}post}(\Cfg, l, \iota)$ for the configuration resulting from applying these updates.

\begin{lemma}[Structural Transformer Coincidence]
\label{lem:struct-coincide}
For every structural label $l\in\mathsf{Lab}$ (corresponding to schemas in Tables~\ref{tab:schemas:own}--\ref{tab:schemas:deref}), let $t_l$ be the unique corresponding transition in the PCPN.
For any configuration $\Cfg$ and instantiation record $\iota$, the following hold: (1) Consumption Match, i.e., the input arc weights of $t_l$ exactly match the SRS resource consumption: $\mathsf{Consume}_l(\Cfg,\iota) = \chi \iff \chi \text{ matches } W^-(t_l)$; and (2) Production Match, i.e., the state update effect is identical modulo $\beta$-renaming. If enabled, $\mathsf{SRS\text{-}post}(\Cfg,l,\iota) \ \equiv_\beta \ \Fire(\Cfg, \langle t_l, \chi, \iota \rangle)$.
\end{lemma}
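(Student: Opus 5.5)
The proof is a case analysis over the finite family of structural labels, Rules 1--16$'$ in Tables~\ref{tab:schemas:own}--\ref{tab:schemas:deref}. The SRS rules for these labels are, by design, read off the same tables: each SRS rule names which resources it consumes, which it produces, and which stack frames it pushes or pops. So for each $l$ we only have to line up three components of the SRS rule with the corresponding PCPN component of $t_l$:
\begin{itemize}
\item $\mathsf{Consume}_l$ with the pre-arcs $W(\cdot,t_l)$;
\item $\mathsf{Produce}_l$ with $\mathsf{Out}(t_l)$ together with $\mathsf{Act}(t_l)$;
\item $\mathsf{PreCond}_\Sigma(l,\Cfg,\iota)$ with $\mathsf{Guard}(t_l)$, combined with the stack-discipline clause of Def.~\ref{def:pcpn:enabling}.
\end{itemize}

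The rules share enough structure that I would group them rather than treat sixteen cases separately.
\begin{itemize}
\item \textbf{Stack-neutral ownership rules} (1, 2, 2$^\star$, 2$'$, 3$'$, 10, 11, 14). The only nontrivial guard is $\mathsf{IsCopy}$ or $\hat\tau:\mathsf{Clone}$. By \textsc{E-Trait}, this is a lookup in $\mathsf{ImplFact}_\Sigma$, the same one used in the SRS precondition.
\item \textbf{Borrow-creating rules} (4, 5, 6, 12, 15, 16). Each moves a token between capability places and performs one push.
\item \textbf{Discharge rules} (7, 8, 9, 13, 15$'$, 16$'$). Each performs a pop.
\end{itemize}

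For part (1), I would argue as follows. By Def.~\ref{def:pcpn:enabling}, $|\chi(p)|=W(p,t_l)$, and in every schema each input place carries weight at most one. So an input choice is exactly a choice of one token per input place. The SRS consumption for $l$ selects precisely the resources whose types and capabilities are named in the schema, so the two characterizations coincide. Since places are indexed by $\langle\kappa,\hat\tau\rangle$ and resources in the SRS carry exactly this information, the bijection between SRS resources and tokens is the identity on type and capability.

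For part (2), I would compute $\Fire(\Cfg,\langle t_l,\chi,\iota\rangle)$ using Def.~\ref{def:pcpn:firing}:
\[
M'=(M\ominus\chi)\oplus\pi .
\]
Here $\pi$ is built with $\mathsf{NewVal}$ or $\mathsf{NewRef}$ from $\nu,\mu$. I would then check, per group, that this matches $\mathsf{SRS\text{-}post}$, with the following points to verify.
\begin{itemize}
\item In the copy and duplication rules, the regenerated token $c$ is re-added unchanged, matching the SRS rule, which leaves the source binding live.
\item In the borrow rules, the pushed string $\bar\gamma$, for example $\mathsf{Shr}(v_o,v_r,L)\cdot\mathsf{Freeze}(v_o)$, must be instantiated with $v_r=\nu$ and $L=\mu$, and these are exactly the identifiers the SRS rule allocates.
\end{itemize}
Freshness is checked by $\FreshOK$ on the PCPN side and by the freshness conjunct of $\mathsf{PreCond}_\Sigma$ on the SRS side. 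The two sides may still pick different fresh names, for example when the SRS allocates from its own supply, so I would build $\beta$ as the bijection that is the identity on names already in $\Cfg$ and sends the SRS fresh names to $\nu,\mu$. It remains to check that this $\beta$ preserves the multiset and stack equalities. That holds because the constructors only embed names; they never compare them with anything other than equality.

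I expect the main obstacle to be the discharge rules, especially 8 versus 9 and the reborrow pops 13, 15$'$, 16$'$, whose frames are written $\mathsf{Mut}(\dots)$. In these rules, enabling depends on the top of the stack matching a specific $\bar\gamma$ whose components must agree with the colors of the consumed tokens. The stack frames are therefore effectively guarded by the token identities. I would first make this explicit: for each pop rule, I would read the elided frame as $\mathsf{Mut}(v_o,v_r,L)$, with $v_o$ the identifier of the blocked (parent) token and $v_r$ the identifier of the child reference, and state that $\mathsf{Guard}(t_l)$ requires $\chi$ to contain exactly those tokens. Given this, the SRS precondition that the ended borrow is the innermost live one corresponds to $S=\bar\gamma\cdot S_{rem}$. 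The choice between Rule 8 and Rule 9 is then determined by whether $\mathsf{Freeze}(v_o)$ immediately follows. So exactly one of the two SRS rules is applicable, matching exactly one of the two transitions, and the restored capability place ($p^{\mathsf{frz}}$ versus $p^{\mathsf{own}}$) agrees.
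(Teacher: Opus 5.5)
Your proposal is correct and follows the same route as the paper's proof. By inspection of the schemas, the unit-weight input arcs of $t_l$ coincide with the SRS consumption, and $\mathsf{Out}(t_l)$ and $\mathsf{Act}(t_l)$ reproduce the SRS update through the same $\mathsf{NewVal}/\mathsf{NewRef}$ constructors and push/pop strings, with fresh-name differences absorbed by $\equiv_\beta$; your grouping, your explicit $\beta$, and your instantiation of the elided $\mathsf{Mut}(\dots)$ frames only spell out what the paper leaves to inspection. One aside is not needed and does not follow from the enabling rule as written: Rules 8 and 9 are not mutually exclusive, since $\mathsf{pop}\ \mathsf{Shr}(v_o,v_r,L)$ also meets the stack-discipline clause when $\mathsf{Freeze}(v_o)$ lies directly beneath it, but the lemma is stated label by label and does not rely on that exclusivity.
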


\begin{proof}
This follows directly from the construction of the PCPN schemas in Section~\ref{sec:method:construct}. Each SRS rule explicitly lists the resources required (e.g., $p^{\mathsf{own}}$). The corresponding PCPN schema (e.g., Rule 4) is constructed with exactly these places as input arcs with unit weight. Thus, $\chi$ forms a valid multiset for $t_l$ iff it satisfies the SRS consumption requirement. The SRS post-state is defined by $(M \ominus \chi) \oplus \pi$ and a stack update. The PCPN firing rule applies $\mathsf{Out}(t_l)$ and $\mathsf{Act}(t_l)$. By inspection of Tables~\ref{tab:schemas:own}--\ref{tab:schemas:deref}, for every rule, $\mathsf{Out}(t_l)$ generates the same multiset $\pi$ using the same auxiliary constructors $\mathsf{NewVal}/\mathsf{NewRef}$, and $\mathsf{Act}(t_l)$ performs the same push/pop operation. Any difference lies solely in the choice of fresh identifiers in $\iota$, which is absorbed by $\equiv_\beta$. \hfill $\blacksquare$
\end{proof}

\begin{lemma}[Guard Equivalence]
\label{lem:guard-eq}
For every label $l\in\mathsf{Lab}$ and configuration $\Cfg$, the SRS preconditions hold if and only if the PCPN guard solver succeeds. Let $\chi$ be a valid resource choice. Then:
\[
\mathsf{PreCond}_\Sigma(l,\Cfg,\iota)
\quad\Longleftrightarrow\quad
\iota \in \SolveGuard(t_l, \Cfg, \chi).
\]
\end{lemma}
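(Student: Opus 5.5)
The plan is to prove the biconditional separately for each class of labels, decomposing both sides along the five premises of \textsc{G-Solve}. The SRS precondition $\mathsf{PreCond}_\Sigma(l,\Cfg,\iota)$ is a conjunction of four kinds of requirement: resource presence, type equality, trait/associated/outlives bounds, and freshness. Resource presence is already discharged by the hypothesis that $\chi$ is a valid resource choice, together with the Consumption Match of Lemma~\ref{lem:struct-coincide}. What remains is to match the other three conjuncts with premises (1)--(5).

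\textbf{Type equality versus (1)--(3).} I would first prove an auxiliary soundness-and-completeness lemma for unification by structural induction on the scheme $\tau$, following the rules \textsc{U-GTy} through \textsc{U-App}. The claim is this: $\Sigma \vdash \tau \unify \langle\hat\tau,\lambda\rangle \Rightarrow (\sigma,\mathcal{O}_u)$ holds iff, for every completion $\sigma^\star \succeq \sigma$ that satisfies $\mathcal{O}_u$, we have $\sigma^\star(\tau) = \hat\tau$ with lifetimes agreeing under $\lambda$. Moreover, $\sigma$ is the least record with this property.

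The inductive cases for tuples and type applications use compatibility of joins. The join $\bigsqcup$ is defined exactly when the componentwise substitutions agree, which is exactly when a single substitution equates all components. The cases $\mathsf{Assoc}$ and $\mathsf{Field}$ are the only ones that do not bind variables. Their equality is deferred to $\mathcal{O}_u$, and it is then discharged by \textsc{E-Assoc}, using the functionality of $\mathsf{AssocTy}_\Sigma$ guaranteed by coherence. Lifting this lemma over all $(p,c)$ with $c\in\chi(p)$ shows the following: the SRS instantiation $\iota$ makes every input type equal to the observed place type iff $\iota$'s substitution extends $\sigma_{forced}$ and is a complete binding, which are premises (2) and (3). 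Return-type-only parameters appear in the SRS as free choices, which matches the nondeterministic completion step.

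\textbf{Bounds and freshness versus (4)--(5).} Each static obligation of $l$ is, by Definition~\ref{def:call-trans} for call labels and by the guard columns of the schema tables for structural labels, exactly $\mathcal{O}(t_l)$. I would check atom by atom that the SRS premise coincides with the entailment rule. Trait bounds coincide with \textsc{E-Trait} via $\mathsf{ImplFact}_\Sigma$; for example, $\mathsf{IsCopy}$ and $\hat\tau:\mathsf{Clone}$ are read as trait atoms. Associated-type equalities coincide with \textsc{E-Assoc}. Outlives atoms coincide with \textsc{E-Life}.

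Freshness in the SRS is literally $\FreshOK$. For call transitions, the static part of $\mathcal{O}_0$ is already pre-entailed at construction time. Any atom that is not pre-entailed remains in the guard, so nothing is lost or duplicated.

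\textbf{Main obstacle.} The hardest step is the outlives case. The SRS lifetime ordering must be shown to coincide with the stack-based relation $S \vdash L_1 \sqsupseteq L_2$, namely equality or ``$L_1$ deeper than $L_2$''. I would handle it with an invariant proved by induction on reachable configurations: in every configuration reachable from $\Cfg_0$, the region labels occurring in live tokens are exactly those in stack frames, and the SRS region nesting order is the stack order. The invariant is preserved by each push and pop schema of Tables~\ref{tab:schemas:borrow}--\ref{tab:schemas:deref}, because borrows nest in LIFO fashion by the explicit-$\mathsf{drop}$ discipline of Section~\ref{sec_background_core}. Under this invariant, \textsc{E-Life} holds precisely when the SRS outlives premise does. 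Combining the three parts yields both directions of the equivalence for every $l$.
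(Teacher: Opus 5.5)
Your decomposition of $\mathsf{PreCond}_\Sigma$ matches the paper's proof: resource presence comes from $\chi$, type equality corresponds to premises (1)--(3) of \textsc{G-Solve}, bounds to (5), and freshness to (4). Your auxiliary unification lemma goes beyond what the paper writes but does no harm.

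The gap is in the step you single out as the main obstacle. Your invariant says that in every reachable configuration, the region labels carried by live tokens are exactly those in stack frames. This is false for this net. Take a struct type $\hat\tau$ with a visible field $f$, and fire Rule 4, then Rule 11 on the new reference, then Rule 9. Rule 9 pops $\mathsf{Shr}(v_o,v_r,L)\cdot\mathsf{Freeze}(v_o)$ and consumes $v_r$. But the field reference $\mathsf{NewVal}(c,\nu)$ stays in $p^{\mathsf{ref}}_{\hat\tau_f}$ with observable label $\lambda(\ell_\bullet)=L$. Likewise, a call transition whose result carries a lifetime parameter not fixed by its inputs gets that label from the completion step (3), and nothing ever pushes it. Moreover, induction over configurations reachable from $\Cfg_0$ would at best give the lemma on reachable configurations. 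The statement quantifies over every $\Cfg$, and Theorems~\ref{thm:step-corr} and~\ref{thm:bisim} invoke it for arbitrary $\beta$-related pairs.

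No invariant is needed. In the paper's setup, an SRS configuration carries the same borrow stack as the PCPN configuration: $\mathsf{Produce}_l$ pushes and pops the same frames, and $\equiv_\beta$ compares stacks directly. The SRS lifetime premise is checked via $\CapOK$ against that stack. So the SRS outlives premise is literally $S\vdash\lambda(\ell_1)\sqsupseteq\lambda(\ell_2)$, i.e.\ the premise of \textsc{E-Life}, for every configuration. This is what the paper means when it says these checks ``are exactly the definition of the entailment relation''.

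There are two smaller inaccuracies. First, \textsc{E-Assoc} does not discharge $\mathsf{Field}(\tau,f)=\hat\tau$: the obligation grammar has no $\mathsf{Field}$ atom, and no entailment rule covers one. Second, the guard $\neg\mathsf{IsCopy}(\hat\tau)$ of Rule 1 is not an \textsc{E-Trait} atom. Both are settled at construction time rather than by entailment. Definition~\ref{def:call-trans} resolves every argument type into $\widehat{\mathsf{Ty}}$, which contains no unresolved $\mathsf{Field}$ or $\mathsf{Assoc}$, and structural schemas are instantiated per ground $\hat\tau$.
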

\begin{proof}
The proof proceeds by showing that the three phases of $\SolveGuard$ are structurally isomorphic to the preconditions of the SRS rules.

\textbf{($\Rightarrow$ Completeness)} Assume $\mathsf{PreCond}_\Sigma$ holds. SRS requires that the types of selected resources match the rule's input signature (e.g., $c \in p^{\mathsf{own}}_{\hat\tau}$). This implies that the observed token type $\hat\tau_{obs}$ unifies with the schema's input type $\tau_{in}$. Thus, Phase 1 of $\SolveGuard$ succeeds, yielding substitution $\sigma_{unify}$. SRS validates trait bounds and lifetime relations (e.g., via $\CapOK$). These static checks are exactly the definition of the entailment relation $\Sigma; S \vdash \mathcal{O}$. Since SRS implies that these hold, Phase 2 of $\SolveGuard$ (Entailment) succeeds. SRS requires $\FreshOK(\Cfg, \nu, \mu)$. This is explicitly checked in Phase 3 of $\SolveGuard$. Since all phases succeed, $\iota \in \SolveGuard$.

\textbf{($\Leftarrow$ Soundness)} Assume $\iota \in \SolveGuard$.
This implies $\Sigma \vdash \text{Unify}$ (Phase 1), $\Sigma \vdash \text{Entail}$ (Phase 2), and $\FreshOK$ (Phase 3). The construction of $t_l$ ensures that its guard obligations $\mathcal{O}(t)$ are exactly the trait bounds from the SRS signature, and $\mathcal{O}_{u}$ captures the geometric constraints of the rule. Therefore, the satisfaction of $\SolveGuard$ implies that all logical predicates in $\mathsf{PreCond}_\Sigma$ are satisfied.  \hfill $\blacksquare$
\end{proof}

\begin{lemma}[$\beta$-Invariance]
\label{lem:beta-inv}
Let $\Cfg_1 \equiv_\beta \Cfg_2$. For every firing step $\Cfg_1 [\varphi_1\rangle \Cfg'_1$, there exists a firing step $\Cfg_2 [\varphi_2\rangle \Cfg'_2$ such that $\mathrm{lab}(\varphi_1) = \mathrm{lab}(\varphi_2)$ and $\Cfg'_1 \equiv_\beta \Cfg'_2$.
\end{lemma}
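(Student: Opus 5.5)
Let $\beta$ be the bijective renaming of value identifiers ($\mathbb{V}\to\mathbb{V}$) and region labels ($\mathbb{L}\to\mathbb{L}$) witnessing $\Cfg_1 \equiv_\beta \Cfg_2$. The plan is to transport the firing instance $\varphi_1=\langle t,\chi_1,\iota_1\rangle$ along $\beta$ and check, condition by condition, that the result is enabled at $\Cfg_2$. Write $\iota_1=\langle\theta,\lambda,\nu,\mu\rangle$. We extend $\beta$ homomorphically to tokens, markings, stack frames, lifetime valuations and instantiation records:
\begin{itemize}
\item on a token, $\beta\langle v,\theta,\lambda\rangle=\langle\beta(v),\theta,\beta\circ\lambda\rangle$;
\item on stack frames, $\beta(\mathsf{Shr}(v_o,v_r,L))=\mathsf{Shr}(\beta v_o,\beta v_r,\beta L)$, and similarly for $\mathsf{Mut}$ and $\mathsf{Freeze}$;
\item on markings and stacks, pointwise.
\end{itemize}
Type substitutions $\theta$ are left untouched, because $\widehat{\mathsf{Ty}}$ contains no identifiers or labels.

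The freshly allocated names in $\nu,\mu$ need separate treatment. They are disjoint from the names live in $\Cfg_1$, but $\beta$ may send them onto names live in $\Cfg_2$. So we first extend $\beta$ to a bijection $\beta'$ that agrees with $\beta$ on the names occurring in $\Cfg_1$ and sends $\mathrm{rng}(\nu)\cup\mathrm{rng}(\mu)$ injectively to names fresh for $\Cfg_2$. Such an extension exists because $\mathbb{V}$ and $\mathbb{L}$ are countably infinite while configurations mention only finitely many names. We then set
\[
\varphi_2=\langle t,\ \beta'\chi_1,\ \langle\theta,\beta'\circ\lambda,\beta'\circ\nu,\beta'\circ\mu\rangle\rangle .
\]
The transition is the same, so $\mathrm{lab}(\varphi_2)=\mathrm{lab}(\varphi_1)$ immediately.

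Next we verify the enabling conditions of Definition~\ref{def:pcpn:enabling} in order.
\begin{enumerate}
\item \emph{Input choice.} $\beta'$ preserves places and multiplicities, so $\beta'\chi_1\subseteq\beta M_1=M_2$ and the arc weights are respected.
\item \emph{Stack discipline.} If $\mathsf{Act}(t)=\mathsf{pop}(\bar\gamma)$ and $S_1=\bar\gamma\cdot S_{rem}$, then $S_2=\beta\bar\gamma\cdot\beta S_{rem}$. Strictly, the popped segment must be read as the one instantiated by $\iota$; its renamed version is instantiated by $\iota_2$.
\item \emph{Guard (\textsc{G-Solve}).} Here the work is a sequence of equivariance facts.
\begin{itemize}
\item $\TokObs(p,\beta c)=\langle\hat\tau,\beta\circ\lambda_c{\upharpoonright}\rangle$. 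By induction on the unification rules (\textsc{U-GTy} through \textsc{U-App}), $\Sigma\vdash\tau\unify\langle\hat\tau,\lambda\rangle\Rightarrow(\langle\theta,\lambda'\rangle,\mathcal{O})$ implies $\Sigma\vdash\tau\unify\langle\hat\tau,\beta\circ\lambda\rangle\Rightarrow(\langle\theta,\beta\circ\lambda'\rangle,\mathcal{O})$. Only \textsc{U-Ref} touches labels, and it commutes with $\beta$.
\item Joins commute with $\beta$ because $\beta$ is injective, so compatibility is preserved and reflected. Completion is preserved likewise.
\item $\FreshOK$ holds for the renamed record by the choice of $\beta'$.
\item For entailment, \textsc{E-Trait} and \textsc{E-Assoc} depend only on $\theta$, which is unchanged. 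For \textsc{E-Life}, $S\vdash L_1\sqsupseteq L_2$ is defined by equality and relative depth in $S$. Both are invariant under applying a bijection to labels framewise, so $\beta S_1\vdash\beta L_1\sqsupseteq\beta L_2$.
\item The guards of the structural schemas ($\mathsf{IsCopy}$, $\hat\tau:\mathsf{Clone}$) depend only on the types of the places, so they are preserved trivially.
\end{itemize}
\end{enumerate}

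For the firing itself, $\mathsf{Out}(t)$ is built from $\mathsf{NewVal}$ and $\mathsf{NewRef}$, which commute with $\beta'$. The call transitions $t_{f,\theta}$ likewise produce tokens from $\nu,\mu$ and the instantiation. Hence
\[
\mathsf{Out}(t)(\Cfg_2,\beta'\chi_1,\iota_2)=\beta'\,\mathsf{Out}(t)(\Cfg_1,\chi_1,\iota_1).
\]
Since $\beta'$ is a bijection, it distributes over $\ominus$ and $\oplus$, so $M_2'=\beta'M_1'$. The stack satisfies $S_2'=\beta'S_1'$ by the same reasoning used for the stack-discipline condition. It follows that $\Cfg'_1\equiv_{\beta'}\Cfg'_2$, and this is the required renaming, since $\equiv_\beta$ allows any bijection.

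The main obstacle is the freshness step. One must argue that the extension $\beta'$ exists and remains a bijection. I would construct it concretely: swap each fresh name $n$ of $\iota_1$ with a chosen fresh name for $\Cfg_2$, and fix the mapping off these finite sets. The argument also relies on every guard being label-equivariant; this is the substantive point, and it holds only because \textsc{E-Life} compares labels solely through equality and stack position.
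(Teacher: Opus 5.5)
Your proposal is correct and takes the same route as the paper. The paper's proof only asserts that all guards, stack patterns and state updates are defined parametrically in identifiers and therefore commute with bijective renaming; you carry out that equivariance check explicitly (\textsc{U-Ref}, joins, \textsc{E-Life}, $\mathsf{Out}$, $\oplus/\ominus$), and you rightly note that $\mathsf{Act}(t)$ must be read as instantiated by $\iota$.

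One small point: the extension $\beta'$ is harmless but needed only if $\beta$ is a partial bijection on the names occurring in the two configurations. A total bijection with $\beta(\Cfg_1)=\Cfg_2$ already sends names fresh for $\Cfg_1$ to names fresh for $\Cfg_2$, by injectivity.
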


\begin{proof}
The proof follows from the structural definition of the transition schemas. Since all guard conditions (including equality checks, unification, and stack patterns) and state updates (multiset operations and stack actions) are defined parametrically over identifiers, they commute with any bijective renaming map. Thus, $\beta$-equivalence is a bisimulation on the configuration space. \hfill $\blacksquare$
\end{proof}

\begin{theorem}[Step Correspondence]
\label{thm:step-corr}
Let $\Cfg_s$ be an SRS configuration and $\Cfg_p$ be a PCPN configuration such that $\Cfg_s \equiv_\beta \Cfg_p$.
For any label $l \in \mathsf{Lab}$:
\begin{equation*}
    \Sigma(\mathcal{C}) \vdash \Cfg_s \xrightarrow{l} \Cfg'_s
    \quad\Longleftrightarrow\quad
    \exists \varphi.\ \Cfg_p[\varphi\rangle \Cfg'_p \ \wedge\ \mathrm{lab}(\varphi)=l \ \wedge\ \Cfg'_s \equiv_\beta \Cfg'_p.
\end{equation*}
\end{theorem}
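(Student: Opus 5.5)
The plan is to prove both directions of Theorem~\ref{thm:step-corr} by first reducing to the case $\Cfg_s = \Cfg_p$ (identical up to the identity renaming), then transporting along $\beta$ using Lemma~\ref{lem:beta-inv}. Concretely, fix a bijective renaming $\beta$ with $\beta(\Cfg_s)=\Cfg_p$. Since SRS rules, like PCPN guards, are stated parametrically in value identifiers and region labels, an SRS step $\Cfg_s \xrightarrow{l} \Cfg'_s$ can be renamed to a step $\beta(\Cfg_s)\xrightarrow{l}\beta(\Cfg'_s)$. The same holds in the reverse direction for PCPN steps by Lemma~\ref{lem:beta-inv}. So it suffices to show that, for one and the same configuration $\Cfg$, $\Cfg \xrightarrow{l} \Cfg'_s$ holds iff some firing instance $\varphi$ labelled $l$ yields $\Cfg'_p \equiv_\beta \Cfg'_s$. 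The relation $\equiv_\beta$ is transitive because bijections compose, so the renamings glue together.

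\textbf{($\Rightarrow$)} Suppose the SRS step is justified by a rule instance with label $l$ and instantiation $\iota$, so $\mathsf{PreCond}_\Sigma(l,\Cfg,\iota)$ holds and $\Cfg'_s = \mathsf{SRS\text{-}post}(\Cfg,l,\iota)$. Take $\chi := \mathsf{Consume}_l(\Cfg,\iota)$.

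For a structural label, part (1) of Lemma~\ref{lem:struct-coincide} gives that $\chi$ matches $W^-(t_l)$ and $\chi\subseteq M$. For a call label $f@\sigma$, take $t_l=t_{f,\theta}$. It exists by Def.~\ref{def:call-trans}: the closed world assumption puts all argument and return types in $\widehat{\mathsf{Ty}}$, and the static obligations hold because SRS checks them against the same $\mathsf{ImplFact}_\Sigma$.

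Lemma~\ref{lem:guard-eq} then yields $\iota\in\SolveGuard(t_l,\Cfg,\chi)$, i.e.\ guard satisfaction. Stack discipline holds because the SRS pop rules also require the frame $\bar\gamma$ on top; the matching $\mathsf{pop}$ Act is given by part (2) of Lemma~\ref{lem:struct-coincide}, or for calls by the empty stack action. Hence $\varphi=\langle t_l,\chi,\iota\rangle$ is enabled, and part (2) gives $\Fire(\Cfg,\varphi)\equiv_\beta\Cfg'_s$.

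\textbf{($\Leftarrow$)} This direction is symmetric. Given an enabled $\varphi=\langle t,\chi,\iota\rangle$ with $\mathrm{lab}(\varphi)=l$, uniqueness of $t_l$ gives $t=t_l$. Lemma~\ref{lem:guard-eq} returns $\mathsf{PreCond}_\Sigma$, and Lemma~\ref{lem:struct-coincide}(1) identifies $\chi$ with $\mathsf{Consume}_l$. The SRS rule therefore fires, and Lemma~\ref{lem:struct-coincide}(2) matches the post-states.

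The main obstacle is the call transitions. Lemma~\ref{lem:struct-coincide} covers only structural labels, so an analogous coincidence for $t_{f,\theta}$ must be argued directly. The argument would check that Def.~\ref{def:call-trans} wires the input places $p^{\mathsf{own}}_{\hat\tau_i}$ for the instantiated parameter types, and that the output places and $\mathsf{Out}$ agree with the SRS \textsf{Call} rule: consume by-value arguments, regenerate borrowed arguments, and produce a fresh result token. That check would go by case analysis on parameter modes (owned non-\texttt{Copy}, \texttt{Copy}, \texttt{\&}, \texttt{\&mut}). A further point to verify is the label: the label $f@\sigma$ must fix $\theta$ so that the transition is unique, while the residual lifetime bindings in $\lambda$ are chosen inside $\iota$.
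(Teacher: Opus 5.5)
Your proposal is correct and follows the paper's route. Both argue by mutual implication: the closed-world assumption supplies $t_l$, Lemma~\ref{lem:guard-eq} turns the SRS preconditions into guard satisfaction and back, and Lemma~\ref{lem:struct-coincide} matches consumption and post-states. Your reduction to a single configuration via Lemma~\ref{lem:beta-inv}, and your separate check of stack discipline, only spell out what the paper compresses into ``identify a resource set $\chi$ in $\Cfg_p$'' and ``the PCPN guard is satisfied, meaning that $t_l$ is enabled.''

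The obstacle you flag is real, and the paper's proof does not address it. Lemma~\ref{lem:struct-coincide} is stated only for structural labels, yet the paper invokes it for every $l\in\mathsf{Lab}$. Your sketched coincidence check for the call transitions $t_{f,\theta}$ is therefore needed to complete the argument, not an extra precaution.
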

\begin{proof}
We prove the equivalence by mutual implication, relying on the structural coincidence (Lemma~\ref{lem:struct-coincide}) and guard equivalence (Lemma~\ref{lem:guard-eq}).

For the forward direction, assume that SRS takes a step $\Sigma \vdash \Cfg_s \xrightarrow{l} \Cfg'_s$. By the Closed World Assumption, the involved types reside within $\widehat{\mathsf{Ty}}$, ensuring that the corresponding transition $t_l$ exists in the net. Since $\Cfg_s \equiv_\beta \Cfg_p$, we can identify a resource set $\chi$ in $\Cfg_p$ corresponding to the resources consumed by SRS. The validity of the SRS step implies that all preconditions holds; thus by Lemma~\ref{lem:guard-eq}, the PCPN guard is satisfied, meaning that $t_l$ is enabled. Furthermore, Lemma~\ref{lem:struct-coincide} guarantees that the state update performed by the SRS rule is functionally identical to the effect of firing $t_l$, producing a post-state $\Cfg'_p$ that is $\beta$-equivalent to $\Cfg'_s$.

Conversely, for the backward direction, assume the PCPN fires $\Cfg_p[\varphi\rangle \Cfg'_p$ with label $l$. Since the transition $t_l$ is enabled, the guard constraints are satisfied. By Lemma~\ref{lem:guard-eq}, this implies that the logical preconditions for the corresponding SRS statement are met. We can thus construct a valid SRS derivation using the same resources. By Lemma~\ref{lem:struct-coincide}, the deterministic update logic of the SRS mirrors the net's transition effect, resulting in a configuration $\Cfg'_s$ that is $\beta$-equivalent to $\Cfg'_p$. \hfill $\blacksquare$
\end{proof}

\begin{theorem}[Strong Bisimulation]
\label{thm:bisim}
The relation $\mathcal{R} = \{ (\Cfg_s, \Cfg_p) \mid \Cfg_s \equiv_\beta \Cfg_p \}$ is a strong bisimulation between $\Sigma$-restricted SRS Label Transition System (LTS) and PCPN firing LTS.
\end{theorem}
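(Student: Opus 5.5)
The plan is to verify directly the two transfer conditions in the definition of strong bisimulation for $\mathcal{R}$, using Theorem~\ref{thm:step-corr} as the main engine. Concretely, I must show two things for every pair $(\Cfg_s,\Cfg_p)\in\mathcal{R}$ and every label $l\in\mathsf{Lab}$:
\begin{itemize}
\item[(i)] every SRS step $\Cfg_s\xrightarrow{l}\Cfg'_s$ is matched by a PCPN firing $\Cfg_p[\varphi\rangle\Cfg'_p$ with $\mathrm{lab}(\varphi)=l$ and $(\Cfg'_s,\Cfg'_p)\in\mathcal{R}$;
\item[(ii)] symmetrically, every PCPN firing labelled $l$ from $\Cfg_p$ is matched by an SRS step from $\Cfg_s$ with the same label into an $\mathcal{R}$-related pair.
\end{itemize}
Before the transfer conditions, I would record that $\mathcal{R}$ relates the initial configurations. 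Under the Closed World Assumption, the initial SRS resource state and $\Cfg_0=\langle M_0,\epsilon\rangle$ are built from the same typed inputs with an empty borrow stack, so they agree up to a choice of identifiers.

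Both transfer conditions are then almost literally the two directions of Theorem~\ref{thm:step-corr}. Its hypothesis $\Cfg_s\equiv_\beta\Cfg_p$ is exactly membership in $\mathcal{R}$. The forward implication supplies $\varphi$ with the right label and a post-state $\Cfg'_p\equiv_\beta\Cfg'_s$, which gives (i). The backward implication gives (ii).

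The point needing care is that Theorem~\ref{thm:step-corr} is stated for a fixed $\Cfg'_s$ on the left. For direction (ii) I must therefore read the backward implication existentially: given the firing, it constructs some SRS successor $\Cfg'_s$ with $\Cfg'_s\equiv_\beta\Cfg'_p$, which is what the proof of Theorem~\ref{thm:step-corr} actually provides. I would restate it in that form explicitly.

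I also need $\equiv_\beta$ to be an equivalence relation, so that $\mathcal{R}$ is well-defined and closed under the compositions used here. Reflexivity comes from the identity renaming, symmetry from inverse bijections, and transitivity from composing bijections on $\mathbb{V}$ and $\mathbb{L}$, acting componentwise on markings and stack frames.

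The main obstacle is freshness. The two systems may choose different fresh identifiers, so the successor obtained may be $\beta$-equivalent only to a differently named firing than the one given. To handle this I would invoke Lemma~\ref{lem:beta-inv} on the PCPN side, together with its evident SRS analogue, since SRS rules are equally parametric in identifiers. This transports any matching step along the renaming witnessing $\Cfg_s\equiv_\beta\Cfg_p$. The result is that, whichever fresh names $\nu,\mu$ the challenger picks, the defender can pick the renamed ones. Note that $\FreshOK$ is preserved by bijective renaming, because liveness of a name is invariant under renaming.

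Combining these gives both transfer properties, so $\mathcal{R}$ is a strong bisimulation. As a corollary, traces from the initial configurations coincide label-for-label, including those ending in closed configurations with empty stack.
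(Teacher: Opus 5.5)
Your proposal is correct and follows essentially the same route as the paper: both transfer conditions come from the two directions of Theorem~\ref{thm:step-corr}, and Lemma~\ref{lem:beta-inv} absorbs the choice of fresh identifiers. Your additional remarks (reading the backward direction existentially, checking that $\equiv_\beta$ is an equivalence, and relating the initial configurations) only make explicit what the paper's one-paragraph proof leaves implicit.
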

\begin{proof}
The result is a direct consequence of Theorem~\ref{thm:step-corr} and $\beta$-invariance property (Lemma~\ref{lem:beta-inv}). Since $\mathcal{R}$ is defined by $\beta$-equivalence, and Theorem~\ref{thm:step-corr} establishes that every step from $\beta$-equivalent states leads to $\beta$-equivalent states with the same label, the simulation conditions are strictly satisfied in both directions. This implies that any trace realizable in the verified PCPN corresponds to a valid SRS program execution, and conversely, any valid execution of a program constructed from the library is explorable by the net. \hfill $\blacksquare$
\end{proof}

\section{Reachability Analysis and Snippet Synthesis}
\label{sec:analysis}
This section develops a finite, bounded reachability graph on top of $\mathcal{N}_{\mathcal{C}}$ and uses it as a search space for witness extraction and snippet emission~\cite{solar2006combinatorial}. Building on Theorem~\ref{thm:bisim}, which guarantees semantic adequacy of PCPN with respect to SRS, we focus on the algorithmic pipeline that turns the net into executable Rust witnesses. Starting from an extracted signature environment $\Sigma(\mathcal{C})$ and callable set $\mathcal{F}$, we construct $\mathcal{N}_{\mathcal{C}}$ as described in Section~\ref{sec:method}. We then fix explicit bounds on type instantiations, token multiplicities, and borrow-stack depth. Configurations are quotiented up to $\beta$-renaming by using a canonicalization function $\Canon$ to ensure finiteness, and a finite reachability graph enabled firings \cite{gabbay2002new},\cite{pitts2013nominal}. Synthesis tasks are performed by searching this graph with an objective predicate, optionally guided by strategy heuristics to prioritize effective parameter or type bindings and suppress endlessly applicable value producers. Once a goal node is reached, we backtrack its firing sequence and emit a Rust snippet by translating each transition into a corresponding core statement.

\subsection{Explicit Bounds and Duplication Limits}
\label{sec:analysis:bounds}
To ensure termination, we must constrain the infinite potential of data duplication and recursion. We define a bound tuple $\mathcal{B} = (\widehat{\mathsf{Ty}}, B, D)$, where $\widehat{\mathsf{Ty}}$ is the finite universe of instantiated types, $B: P \to \mathbb{N}$ limits token count per place, and $D \in \mathbb{N}$ limits stack depth.

In Section~\ref{sec:method:construct}, we have introduced structural schemas for $\mathsf{DupCopy}$ and $\mathsf{DupClone}$. These are the sole sources of unbounded token growth. Under bound $B$, a duplication transition $t_{dup}$ for type $\tau$ is enabled at marking $M$ only if: $|M(p^{\mathsf{own}}_\tau)| < B(p^{\mathsf{own}}_\tau)$. This constraint is naturally enforced by the global definition of bounded reachability: $\Cfg \xrightarrow{\varphi}_{B,D} \Cfg'$ iff $\Cfg \xrightarrow{\varphi} \Cfg'$ and $\mathsf{BudgetOK}_{B,D}(\Cfg')$, where $\mathsf{BudgetOK}_{B,D}(\langle M, S \rangle) \equiv (\forall p. |M(p)| \le B(p)) \wedge |S| \le D$. Write $\Cfg \xrightarrow{\varphi}_{B,D} \Cfg'$ if $\Cfg[\varphi\rangle \Cfg'$ (Def.~\ref{def:pcpn:firing})
and $\mathsf{BudgetOK}_{B,D}(\Cfg')$.
Let $\Rightarrow^{*}_{B,D}$ be the reflexive transitive closure of $\xrightarrow{\ }_{B,D}$.

\subsection{Canonicalization and finite branching}
\label{sec:analysis:canon}
Reachability is invariant under bijective renaming of value identifiers and region labels (Lemma~\ref{lem:beta-inv}). Consequently, we perform the analysis over canonical representatives of $\beta$-equivalence classes, computed via a canonicalization function.

\begin{definition}[Canonicalization]
\label{def:canon}
A canonicalization function $\Canon:\Cfg\to\Cfg$ maps a configuration to a representative of its $\beta$-equivalence class by: i) renaming value identifiers in the order of their occurrence to a fixed canonical pool; ii) renaming region labels in the order of their occurrence when scanning the stack from top to bottom; and iii) sorting token multisets within each place by a fixed total order on colors.
\end{definition}

Because $\widehat{\mathsf{Ty}}$ and the sets of signature variables $\mathsf{TVar},\mathsf{LVar}$ are finite, and because the assumed boundedness of the system limits the number of simultaneously live identifiers and labels, the search space of unification completions is finite. In the implementation, we additionally choose the least fresh identifier or label with respect to fixed total orders, making freshness choices deterministic.

\subsection{Canonical Bounded Reachability Graph}
\begin{theorem}[Finiteness of Reachable States]
\label{thm:canon-reach}
The set of canonical bounded reachable states, defined as $\widehat{V}_{B,D}
~:=~
\left\{\ \Canon(\Cfg)\ \middle|\ \langle M_0,\epsilon\rangle \Rightarrow^{*}_{B,D} \Cfg\ \right\}$,
is finite.
\end{theorem}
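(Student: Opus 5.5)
The plan is to bound $\widehat{V}_{B,D}$ by an explicitly finite set of canonical configurations, using only the budget $\mathsf{BudgetOK}_{B,D}$ and the definition of $\Canon$ (Def.~\ref{def:canon}). Reachability itself plays no role; we only show that every configuration satisfying the budget has a canonical form drawn from a finite set.

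\textbf{Step 1 (shape is bounded).} Every $\Cfg=\langle M,S\rangle$ reachable under $\Rightarrow^{*}_{B,D}$ satisfies $\mathsf{BudgetOK}_{B,D}$. The initial configuration must satisfy it too; we assume this, since otherwise $M_0$ alone is a single state and is trivially handled. It follows that $|M(p)|\le B(p)$ for each of the finitely many places $p\in P$, and $|S|\le D$. So the total number of tokens is at most $N:=\sum_{p\in P}B(p)$, which is finite because $P$ is indexed by the finite set $\{\mathsf{own},\mathsf{frz},\mathsf{blk}\}\times\widehat{\mathsf{Ty}}$. Likewise the number of stack frames is at most $D$.

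\textbf{Step 2 (identifiers are bounded).} Each token color $\langle v,\theta,\lambda\rangle$ contains one value identifier. Each stack frame contains at most two value identifiers and one region label. Hence at most $N+2D$ distinct value identifiers occur in $\Cfg$.

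Region labels occur in the stack frames and in the ranges of the $\lambda$'s. For $\lambda$ we must bound the domain, and this is the point I expect to be delicate. Tokens only ever acquire lifetime bindings through $\mathsf{NewRef}$, which updates $\ell_\bullet$, or through completion in (\textsc{G-Solve}), which binds only $\mathsf{Vars}(t)$. So every $\lambda$ has domain contained in the finite set $\mathsf{LVar}_{\mathcal{C}}\cup\{\ell_\bullet\}$ of lifetime variables actually occurring in the signatures of $\mathcal{F}$. The same argument gives $\dom(\theta)\subseteq\mathsf{TVar}_{\mathcal{C}}$ with values in the finite set $\widehat{\mathsf{Ty}}$, so there are finitely many possible $\theta$.

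These observations bound the number of distinct region labels in $\Cfg$ by $R:=D+N\cdot|\mathsf{LVar}_{\mathcal{C}}\cup\{\ell_\bullet\}|$. This step should be proved as a small invariant by induction on the firing sequence: every transition schema and every call transition produces colors whose $\theta$ and $\lambda$ have domains inside these fixed finite sets. I expect this invariant to be the main obstacle, because the paper states finiteness of $\mathsf{TVar},\mathsf{LVar}$ only informally. I would make the restriction to signature-occurring variables explicit and then check the invariant against $\mathsf{NewVal}$, $\mathsf{NewRef}$ and the $\mathsf{Out}$ of $t_{f,\theta}$.

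\textbf{Step 3 (canonical forms lie in a finite set).} By Def.~\ref{def:canon}, $\Canon$ renames value identifiers injectively into the first $N+2D$ elements of a fixed canonical pool. It renames region labels into the first $R$ canonical labels. It then sorts each place's multiset, so $\Canon(\Cfg)$ is determined by finitely many pieces of finite data:
\begin{itemize}
\item for each place, a sorted multiset of at most $B(p)$ colors over the finite set $\mathbb{V}_{N+2D}\times\Theta_{\mathcal{C}}\times\Lambda_{\mathcal{C},R}$;
\item a stack word of length at most $D$ over the finite alphabet obtained by restricting $\Gamma$ to canonical identifiers and labels.
\end{itemize}
A finite product of finite sets is finite, so $\widehat{V}_{B,D}$ is a subset of a finite set and is therefore finite. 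Lemma~\ref{lem:beta-inv} is not needed for finiteness itself. It only justifies that exploring canonical representatives loses no behaviour, which is what the subsequent graph construction relies on.
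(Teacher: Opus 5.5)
Your proposal is correct and follows the paper's own route: bound tokens by $\sum_{p\in P}B(p)$ and frames by $D$ via $\mathsf{BudgetOK}_{B,D}$, then note that $\Canon$ leaves only finitely many shapes; you additionally make explicit the identifier and label counts and the finiteness of the color components $\theta,\lambda$, which the paper gets only by stipulating (Section~\ref{sec:analysis:canon}) that $\mathsf{TVar},\mathsf{LVar}$ are finite. Two minor remarks: your opening claim that reachability plays no role conflicts with your own Step~2 invariant, which is needed unless you adopt that stipulation; and, like the paper, you read $\Canon$ as renaming all region labels rather than only the stack labels mentioned in Def.~\ref{def:canon}(ii), a reading that is actually required because, e.g., Rule~14 copies a reference's label into an owned token that survives the pop of its frame.
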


\begin{proof}
By the bounding condition $\mathsf{BudgetOK}_{B,D}$, every reachable configuration contains at most $\sum_{p\in P}B(p)$ tokens and at most $D$ frames. The set of places $P$ and the set of ground types $\widehat{\mathsf{Ty}}$ are finite. The canonicalization function abstracts away concrete identifier and label names, leaving only finitely many canonical shapes under the given bounds. \hfill $\blacksquare$
\end{proof}

\begin{algorithm}[t]
\caption{Worklist saturation of $\widehat{G}_{B,D}$ with a parent map}
\label{alg:buildmrg}
\DontPrintSemicolon
\KwIn{PCPN $\mathcal{N}_{\mathcal{C}}$, bounds $(B,D)$}
\KwOut{Graph $\widehat{G}_{B,D}$ and parent map $\mathsf{par}$}

$\widehat{\Cfg}_0 \leftarrow \Canon(\langle M_0,\epsilon\rangle)$\;
$Q \leftarrow [\widehat{\Cfg}_0]$; $\widehat{V} \leftarrow \{\widehat{\Cfg}_0\}$; $\widehat{E} \leftarrow \emptyset$; $\mathsf{par}(\widehat{\Cfg}_0) \leftarrow \bot$\;

\While{$Q \neq \emptyset$}{
    $\widehat{\Cfg} \leftarrow \mathrm{pop}(Q)$\;
    \ForEach{$t \in T$}{
        \ForEach{valid inputs $\chi \subseteq \widehat{\Cfg}.M$}{
            \ForEach{$\iota \in \SolveGuard(t,\widehat{\Cfg},\chi)$}{
                \If{$t$ is enabled at $\widehat{\Cfg}$ under $(\chi,\iota)$}{
                    $\Cfg' \leftarrow \Fire(\widehat{\Cfg},\langle t,\chi,\iota\rangle)$\;
                    \If{$\mathsf{BudgetOK}_{B,D}(\Cfg')$}{
                        $\widehat{\Cfg}' \leftarrow \Canon(\Cfg')$\;
                        $\widehat{E} \leftarrow \widehat{E} \cup \{(\widehat{\Cfg},\langle t,\chi,\iota\rangle,\widehat{\Cfg}')\}$\;
                        \If{$\widehat{\Cfg}' \notin \widehat{V}$}{
                            $\widehat{V} \leftarrow \widehat{V} \cup \{\widehat{\Cfg}'\}$;$\mathrm{push}(Q,\widehat{\Cfg}')$\;
                            $\mathsf{par}(\widehat{\Cfg}') \leftarrow (\widehat{\Cfg},\langle t,\chi,\iota\rangle)$\;
                        }
                    }
                }
            }
        }
    }
}
\Return{$(\widehat{G}_{B,D}=(\widehat{V},\widehat{E}),\ \mathsf{par})$}\;
\end{algorithm}

\begin{definition}[Canonical bounded reachability graph]
\label{def:cbg}
The canonical bounded reachability graph is the directed labeled graph
$\widehat{G}_{B,D}=(\widehat{V}_{B,D},\widehat{E}_{B,D})$ where
$(\widehat{\Cfg},\varphi,\widehat{\Cfg}')\in \widehat{E}_{B,D}$ if
there exist configurations $\Cfg,\Cfg'$ such that
$\Canon(\Cfg)=\widehat{\Cfg}$, $\Canon(\Cfg')=\widehat{\Cfg}'$, and $\Cfg \xrightarrow{\varphi}_{B,D} \Cfg'$.
\end{definition}

Algorithm~\ref{alg:buildmrg} constructs $\widehat{G}_{B,D}$ via worklist saturation. Starting from $\widehat{\Cfg}_0$, it iteratively expands states, and valid successors (checked via $\mathsf{BudgetOK}$) are canonicalized and added to the graph if new, mapping parents in $\mathsf{par}$ for witness extraction.

\begin{theorem}[Termination and completeness within bounds]
\label{thm:mrg-term-comp}
Algorithm~\ref{alg:buildmrg} terminates and returns the complete graph $\widehat{G}_{B,D}$.
Moreover, for every node $\widehat{\Cfg}\in\widehat{V}_{B,D}$ with $\mathsf{par}(\widehat{\Cfg})\neq\bot$,
backtracking $\mathsf{par}$ yields a witness firing sequence from $\widehat{\Cfg}_0$ to $\widehat{\Cfg}$.
\end{theorem}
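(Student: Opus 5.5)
The plan is to prove the three claims (termination, completeness, witness soundness) separately, using Theorem~\ref{thm:canon-reach} for termination, an invariant argument for completeness, and an induction on discovery order for witnesses.

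\textbf{Termination.} Every node pushed onto $Q$ is first inserted into $\widehat{V}$, and only if it was not already there. Every inserted node is of the form $\Canon(\Cfg')$ for a $\Cfg'$ reached from $\widehat{\Cfg}_0$ by budget-respecting firings. By Theorem~\ref{thm:canon-reach}, $\widehat{V}$ is therefore always a subset of the finite set $\widehat{V}_{B,D}$. Hence at most $|\widehat{V}_{B,D}|$ nodes are ever pushed, and the outer loop runs finitely often.

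For each popped node, the three nested loops must also be finite. The transition set $T$ is finite. The input choices $\chi \subseteq \widehat{\Cfg}.M$ are finite because the marking is finite. The set $\SolveGuard(t,\widehat{\Cfg},\chi)$ should be finite as well: completions range over the finite $\widehat{\mathsf{Ty}}$ and over labels that are live or fresh. Freshness is made deterministic by choosing the least fresh identifier, as noted after Definition~\ref{def:canon}; I would quote that remark for this point.

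\textbf{Completeness.} I will show that the returned $\widehat{V}$ equals $\widehat{V}_{B,D}$ and that $\widehat{E}$ equals $\widehat{E}_{B,D}$. Inclusion $\subseteq$ holds by construction: each added edge is witnessed by $\widehat{\Cfg}$ itself, which is a concrete configuration with $\Canon(\widehat{\Cfg})=\widehat{\Cfg}$ (by idempotence of $\Canon$).

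For $\supseteq$, I induct on the length of a bounded run $\langle M_0,\epsilon\rangle \Rightarrow^{*}_{B,D} \Cfg$ and show $\Canon(\Cfg)\in\widehat{V}$. In the inductive step, $\Cfg \xrightarrow{\varphi}_{B,D} \Cfg'$ and $\Canon(\Cfg)$ was eventually popped, since every node in $\widehat{V}$ is pushed exactly once and $Q$ is drained. Because $\Cfg \equiv_\beta \Canon(\Cfg)$, Lemma~\ref{lem:beta-inv} gives a step from $\Canon(\Cfg)$ with the same label to some $\Cfg''$ satisfying $\Cfg''\equiv_\beta\Cfg'$. Budget is preserved under renaming, since renaming does not change token counts or stack depth. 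The loops enumerate all $(t,\chi,\iota)$, so this step is examined, and $\Canon(\Cfg'')=\Canon(\Cfg')$ is added.

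The same argument covers the edges of Definition~\ref{def:cbg}, provided the edge label $\varphi$ is read modulo $\beta$. This is the \textbf{main obstacle}. Two facts must be justified:
\begin{itemize}
\item $\Canon$ is a complete invariant of $\equiv_\beta$, i.e.\ $\Cfg_1\equiv_\beta\Cfg_2$ implies $\Canon(\Cfg_1)=\Canon(\Cfg_2)$. I would derive this from the occurrence-order renaming together with the total order on colors in Definition~\ref{def:canon}.
\item The deterministic choice of fresh names loses no behaviour up to $\beta$. Any two fresh choices differ by a bijection fixing all live names, so Lemma~\ref{lem:beta-inv} applies.
\end{itemize}

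\textbf{Witnesses.} I prove by induction on insertion order into $\widehat{V}$ that, whenever $\mathsf{par}(\widehat{\Cfg})=(\widehat{\Cfg}_p,\varphi)$, the parent $\widehat{\Cfg}_p$ was inserted strictly earlier. This holds because a node is popped only after its own insertion. Consequently, the chain of $\mathsf{par}$ pointers strictly decreases in insertion index, so backtracking is acyclic and terminates at $\widehat{\Cfg}_0$, the only node with $\mathsf{par}=\bot$.

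Each link satisfies $\widehat{\Cfg}_p[\varphi\rangle\Cfg'$ with $\Canon(\Cfg')=\widehat{\Cfg}$. To concatenate the links into one concrete firing sequence, I rename each step by the bijection between $\Cfg'$ and $\widehat{\Cfg}$ and apply Lemma~\ref{lem:beta-inv} repeatedly. This yields a genuine valid trace from $\widehat{\Cfg}_0$ to a configuration $\beta$-equivalent to $\widehat{\Cfg}$. By Theorem~\ref{thm:bisim}, that trace corresponds to an SRS execution.
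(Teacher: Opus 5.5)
Your proposal is correct and follows the paper's route: termination from the finiteness of $\widehat{V}_{B,D}$ (Theorem~\ref{thm:canon-reach}) plus each node being enqueued once, completeness from exhaustive enumeration of bounded-enabled firings at every discovered node, and witnesses from $\mathsf{par}$ being set at first discovery. You additionally make explicit several facts the paper's three-line proof uses silently: that $\Canon$ is a complete invariant of $\equiv_\beta$, that deterministic fresh-name choice loses nothing up to renaming (via Lemma~\ref{lem:beta-inv}), that edge labels and the backtracked trace hold only modulo $\beta$, and that $\mathsf{par}$ is acyclic.
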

\begin{proof}
Termination follows from finiteness of $\widehat{V}_{B,D}$ (Theorem~\ref{thm:canon-reach}) and the fact that each new node is enqueued once.
Completeness follows because the algorithm enumerates all bounded-enabled firings from every discovered node and adds the corresponding edges \cite{jensen1987coloured,liu2022petri}.
The parent map is set on the first discovery of each node, and ths backtracking yields a valid path. \hfill $\blacksquare$
\end{proof}
Once the reachability graph is constructed, we synthesize the final executable Rust code using the procedure outlined in Algorithm~\ref{alg:synthesize}.
This algorithm transforms a graph path into a valid sequence of operations that satisfies the synthesis objective.

\begin{algorithm}[t]
\caption{Goal-directed synthesis over $\widehat{G}_{B,D}$}
\label{alg:synthesize}
\DontPrintSemicolon
\KwIn{MRG $(\widehat{G}_{B,D},\mathsf{par})$, Objective $\mathsf{Goal}$}
\KwOut{Rust Snippet satisfying $\mathsf{Goal}$, or $\bot$}

Select $\widehat{\Cfg}_g \in \widehat{V}_{B,D}$ such that $\mathsf{Goal}(\widehat{\Cfg}_g)$\;
\lIf{no such $\widehat{\Cfg}_g$ exists}{\Return $\bot$}

$\sigma \leftarrow \mathsf{Backtrack}(\mathsf{par},\widehat{\Cfg}_g)$\;
Let $\Cfg_g$ be the concrete state reached by $\sigma$ from $\langle M_0, \epsilon \rangle$\;
$\sigma_{\mathrm{close}} \leftarrow \textsc{CloseStack}(\Cfg_g)$; \lIf{$\sigma_{\mathrm{close}}=\bot$}{\Return $\bot$}

$\sigma^\downarrow \leftarrow \sigma \cdot \sigma_{\mathrm{close}}$\;
\Return{$\mathsf{Emit}(\sigma^\downarrow)$}\;
\end{algorithm}
Algorithm~\ref{alg:synthesize} first scans the vertices $\widehat{V}_{B,D}$ to find a candidate state $\widehat{\Cfg}_g$ that meets the user-defined $\mathsf{Goal}$. It then recovers the execution trace $\sigma$ by backtracking via the parent map.
However, this trace may end inside nested scopes. To ensure syntactic validity (e.g., closing open scopes), it invokes Algorithm~\ref{alg:closestack}, which iteratively fires unique scope-ending transitions until the stack empties ($S=\epsilon$), yielding a complete sequence $\sigma^\downarrow$.

\subsection{Witness extraction, deterministic stack closure, and emission}
\label{sec:analysis:emit}

\begin{definition}[Backtracking a witness]
\label{def:backtrack}
Given a parent map $\mathsf{par}$ and a node $\widehat{\Cfg}$, define $\mathsf{Backtrack}(\mathsf{par},\widehat{\Cfg})$ as:
$\mathsf{Backtrack}(\mathsf{par},\widehat{\Cfg}_0)=\epsilon$, and if
$\mathsf{par}(\widehat{\Cfg})=(\widehat{\Cfg}_p,\varphi)$ then
$\mathsf{Backtrack}(\mathsf{par},\widehat{\Cfg})=\mathsf{Backtrack}(\mathsf{par},\widehat{\Cfg}_p)\cdot \varphi$.
\end{definition}

A configuration is \emph{closed} if its stack is empty: $\mathsf{Closed}(\langle M,S\rangle)\triangleq (S=\epsilon)$.
Given a witness reaching $\Cfg$, we optionally append a closure suffix that repeatedly pops the top frame.

\begin{algorithm}[t]
\caption{$\textsc{CloseStack}(\Cfg)$}
\label{alg:closestack}
\DontPrintSemicolon
\KwIn{Configuration $\Cfg=\langle M,S\rangle$}
\KwOut{Closing sequence $\sigma_{\mathrm{close}}$, or $\bot$}

$\sigma_{\mathrm{close}} \leftarrow \epsilon$\;
\While{$S \neq \epsilon$}{
    Let $\gamma = \mathrm{top}(S)$\;
    Find unique transition $t$ enabling $\mathsf{pop}(\gamma)$
    \lIf{no such transition exists}{\Return $\bot$}
    
    $\Cfg \leftarrow \Fire(\Cfg, t)$;$\sigma_{\mathrm{close}} \leftarrow \sigma_{\mathrm{close}} \cdot t$;$S \leftarrow \Cfg.S$\;
}
\Return{$\sigma_{\mathrm{close}}$}\;
\end{algorithm}

\begin{theorem}[Soundness of bounded synthesis]
\label{thm:synth-sound}
If Algorithm~\ref{alg:synthesize} returns $s\neq\bot$, then there exists a closed bounded firing sequence $\sigma^\downarrow$
such that $s=\mathsf{Emit}(\sigma^\downarrow)$ and
$\langle M_0,\epsilon\rangle[\sigma^\downarrow\rangle_{B,D}\Cfg$ for some closed configuration $\Cfg$.
Consequently, the emitted snippet is accepted by the SRS equivalently, by the guarded signature checks.
\end{theorem}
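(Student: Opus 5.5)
The proof traces Algorithm~\ref{alg:synthesize} along its only non-$\bot$ exit, checks that the concatenated sequence $\sigma^\downarrow=\sigma\cdot\sigma_{\mathrm{close}}$ is a bounded firing sequence ending in a closed configuration, and then transfers acceptance to SRS via Theorem~\ref{thm:bisim}.

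\textbf{Step 1: the prefix $\sigma$.} A non-$\bot$ return means a goal node $\widehat{\Cfg}_g\in\widehat{V}_{B,D}$ was selected and $\sigma=\mathsf{Backtrack}(\mathsf{par},\widehat{\Cfg}_g)$. Theorem~\ref{thm:mrg-term-comp} says $\sigma$ is a witness firing sequence from $\widehat{\Cfg}_0$ to $\widehat{\Cfg}_g$ in $\widehat{G}_{B,D}$. Each edge of that path joins canonical representatives, not concrete configurations. I would replay it concretely by induction on its length, starting from $\langle M_0,\epsilon\rangle$. At each step the current concrete configuration is $\beta$-equivalent to the canonical source of the edge. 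Lemma~\ref{lem:beta-inv} then gives a firing with the same label whose result is $\beta$-equivalent to the canonical target. Budget satisfaction carries over, because $\mathsf{BudgetOK}_{B,D}$ depends only on token counts and stack length, and these are invariant under renaming. This yields $\langle M_0,\epsilon\rangle[\sigma\rangle_{B,D}\Cfg_g$ with $\Canon(\Cfg_g)=\widehat{\Cfg}_g$. This $\Cfg_g$ is the concrete state the algorithm names.

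\textbf{Step 2: the closure suffix, the expected obstacle.} Since \textsc{CloseStack} did not return $\bot$, its loop exited with $S=\epsilon$, and each iteration fired an enabled pop transition. Every pop transition (Rules 7, 8, 9, 13, 15$'$, 16$'$) consumes at least as many tokens as it produces in each place, except that it may put one token back into an own-place or mut-place. I would handle that place by a counting argument. The token returned was originally moved out of that same place by the matching push, which also created the reference token that the pop now consumes. So the count at the target place after the pop is at most its count before the corresponding push, and that earlier count was within budget. Every pop also strictly shortens the stack, so $|S|\le D$ is preserved. If this counting is too delicate, my fallback is to read $\Fire$ inside \textsc{CloseStack} as the bounded step relation, so that a budget violation is reported as failure, that is, as $\bot$. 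Either way, concatenation gives $\langle M_0,\epsilon\rangle[\sigma^\downarrow\rangle_{B,D}\Cfg$ with $\mathsf{Closed}(\Cfg)$, and $s=\mathsf{Emit}(\sigma^\downarrow)$ holds by the last line of the algorithm.

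\textbf{Step 3: acceptance by SRS.} The initial configurations of SRS and PCPN are $\beta$-equivalent, as both have empty stacks and the same initial resources under the Closed World Assumption. Applying Theorem~\ref{thm:step-corr} repeatedly, equivalently Theorem~\ref{thm:bisim}, turns the firing sequence $\sigma^\downarrow$ into an SRS execution with the same label sequence. This execution ends in a configuration $\beta$-equivalent to $\Cfg$, and in particular one with an empty borrow stack. By Lemma~\ref{lem:guard-eq}, every step's preconditions, namely the guarded signature checks, hold. $\mathsf{Emit}$ translates each label into its core statement, so the emitted program is exactly this SRS derivation and is therefore accepted.
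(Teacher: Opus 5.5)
Your decomposition is the paper's own: a bounded witness from $\mathsf{par}$, closure by pops, and guarded enabling plus Theorem~\ref{thm:step-corr} for SRS acceptance. Your Step~1 sharpens the paper's ``by construction'': you replay the canonical path concretely using Lemma~\ref{lem:beta-inv} and the fact that $\mathsf{BudgetOK}_{B,D}$ is invariant under renaming.

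\textbf{The gap is in Step~2.} Your counting argument is false, because between a push and its matching pop, other firings can refill the place to which the pop returns the owner. Concretely, take $B(p^{\mathsf{own}}_{\hat\tau})=1$ (other bounds generous) with a single token $x$ there.
\begin{itemize}
\item Rule~6 moves $x$ to $p^{\mathsf{blk}}_{\hat\tau}$ and pushes $\mathsf{Mut}(x,r,L)$.
\item A 0-ary API call then puts a fresh $y$ into $p^{\mathsf{own}}_{\hat\tau}$. This is still within budget and leaves the stack unchanged.
\item If this node is the selected goal, \textsc{CloseStack} fires Rule~7. Now $p^{\mathsf{own}}_{\hat\tau}$ holds two tokens, so $\mathsf{BudgetOK}_{B,D}$ fails.
\end{itemize}
Rule~9 (after Rule~4) behaves identically, and Rules~13 and 16$'$ have the same shape: each returns a token to a place whose count the matching push decremented but intervening firings may have restored. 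Only the depth half of your argument is correct: pops shorten $S$, so $|S|\le D$ is preserved.

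Consequently, for Algorithm~\ref{alg:closestack} as written, the claim $\langle M_0,\epsilon\rangle[\sigma^\downarrow\rangle_{B,D}\Cfg$ cannot be established. Your ``fallback'' is therefore not an optional reading of $\Fire$ but a necessary amendment: \textsc{CloseStack} must fire via $\xrightarrow{\ }_{B,D}$ and return $\bot$ on a budget violation. Alternatively, weaken the conclusion. The prefix is $(B,D)$-bounded, while the suffix respects $D$ and the relaxed bound $B(p)+D$. This holds because each pop adds at most one token to one place and removes at least one of the at most $D$ frames.

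Your Step~3 does not need boundedness at all, since Theorem~\ref{thm:step-corr} relates unbounded firings to SRS steps. So the SRS-acceptance conclusion survives either way. The paper's proof is silent on this issue; it only says that $\sigma^\downarrow$ reaches a closed configuration by repeated pops. You were right to flag it, but the justification you gave for it does not hold.
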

\begin{proof}
By construction, $\sigma$ is a bounded witness to $\widehat{\Cfg}_g$ in the canonical graph.
If $CloseStack$ succeeds, $\sigma^\downarrow$ reaches a closed configuration by repeated stack pops.
Each firing in $\sigma^\downarrow$ is enabled by guarded solving, hence corresponds to a valid SRS step, and hence $\mathsf{Emit}(\sigma^\downarrow)$ is a valid SRS trace realization. \hfill $\blacksquare$
\end{proof}


\section{Conclusion}
\label{sec:conlusion}
This paper presented a semantics-first approach to signature-level Rust code generation, establishing a strong bisimulation between SRS and PCPN to characterize compilable traces. Under explicit bounds, we construct a finite reachability graph for deterministic witness extraction. However, our analysis is an under-approximation limited by the specified budget and potential state explosion in complex generic dependencies. Future work includes extending the theory to \textit{unsafe} and \textit{async} features, optimizing graph exploration via guided heuristics, and applying the generator to differential compiler testing.

\bibliographystyle{splncs04}
\bibliography{pcpn}

\end{document}